\newtheorem{theorem}{Theorem}
\newtheorem{lemma}{Lemma}
\def\Pr{\mathop{\rm Pr}}
\def\area{\mathop{\rm area}}
\def\bd{\mathop{\rm bd}}
\def\peri{\mathop{\rm peri}}
\def\components{\mathop{\rm components}}
\newenvironment{proof}%
{\noindent\emph{Proof}.\hspace{1ex}}%
{\hfill\unitlength=0.18ex%
  \begin{picture}(12,12)
    \put(1,1){\framebox(9,9){}}
    \put(1,4){\framebox(6,6){}}
  \end{picture}\linebreak
}
\begin{document}

\title{
Local Event Boundary Detection with Unreliable Sensors:
Analysis of the Majority Vote Scheme\thanks{Work by C.-S. Shin was supported by National Research Foundation of Korea(NRF) grant funded by the Korea government(MEST) (No. 2011-0002827).}}

\author{
	Peter~Brass\thanks{Dept. of Computer Science, City College, New York, USA \tt{peter@cs.ccny.cuny.edu}} \and 
	Hyeon-Suk Na\thanks{School of Computing, Soongsil University, Seoul, Korea. \tt{hsnaa@ssu.ac.kr}} 
	\and Chan-Su Shin\thanks{Dept. of Electrical Information Engineering, Hankuk University of Foreign Studies, Korea. \tt{cssin@hufs.ac.kr}}
}
\date{}
\maketitle

\begin{abstract}
In this paper we study the identification of an event region $X$ within a
larger region $Y$, in which the sensors are distributed by a Poisson process
of density $\lambda$ to detect this event region, i.e., its boundary. The model of sensor is a 0-1
sensor that decides whether it lies in $X$ or not, and which might be
incorrect with probability $p$. It also collects information on the
0-1 values of the neighbors within some distance $r$ and revises its
decision by the majority vote of these neighbors. In the most general setting, we analyze this simple majority vote scheme
and derive some upper and lower bounds on the expected number of misclassified
sensors. These bounds depend on several sensing parameters of $p$, $r$, and some geometric parameters of the event region $X$. By making some assumptions on the shape of $X$, we prove a significantly improved upper bound on the expected number of misclassified sensors; especially for convex regions with sufficiently round boundary, and we find that the majority vote scheme performs well in the simulation rather than its theoretical upper bound.
\end{abstract}


\section{Introduction}
Suppose we have distributed many sensors in a region, each of which detects
if it rains at that point. We want to obtain a summary: in which sub-region
is it raining? Just listing all the positions at which a raindrop has been
detected is not a helpful answer, first because a long list of positions
is not the answer a user would want on the question ``Where does it rain?,''
but also because each individual answer is subject to random
errors; the detected drop of water could have come from an air-conditioner,
or from children splashing in the water, or numerous other random events,
and in the same way, even if it is raining, the sensor might coincidentally
not catch any drop. We expect a useful answer to the question ``Where does it rain?'' to be some region with a simple structure.

\par
The abstract model underlying this question is as follows: we have a
region $Y$, in which there is a set $S$ of sensors. There is an unknown
region $X \subseteq Y$ in which the event happens. Possible events would
be rain, forest fires, or occurrences of invasive species. We want to detect the event region $X$, most importantly, its boundary, where it is far from the boundary of $Y$ and has a `nice' topology, not being highly irregular, random or fractal. The sensors $s\in S$ are 0-1 sensors who decide whether $s\in X$ or $s\notin X$, making an error with probability $p$ in this measurement, called the measurement error.

\subsection{Simple Majority Voting Scheme}
Our aim is to reduce the error rate in detecting the boundary of $X$
by allowing each sensor to compare its result with those of its neighbors. To reduce the error rate by local communication, we assume that each sensor knows the values measured by all neighboring sensors within distance $r$. The most
straightforward method to use the neighbors' sensing information is
to follow the majority.

This majority vote scheme is as follows: if the sensor has $k$ neighbors and knows its own and those $k$ other measurements, then in its revised decision it just follows the majority of the measurements of its $k$ neighbors, with itself as tie-breaking if necessary.
This scheme was already proposed by Chintalapudi and Govindan~\cite{CG03}
and further in~\cite{KI04}. It does not use the position information of
the neighboring sensors. It was stated in~\cite{KI04} that this scheme
gives a good correction of measurement errors for sensor error $p$ up
to $0.2$. However, we think this observation needs further qualification,
since the situation really depends on the size of voting neighborhood
and several geometric parameters of the boundary of $X$.

\begin{figure}
\centering
\includegraphics[width=11cm]{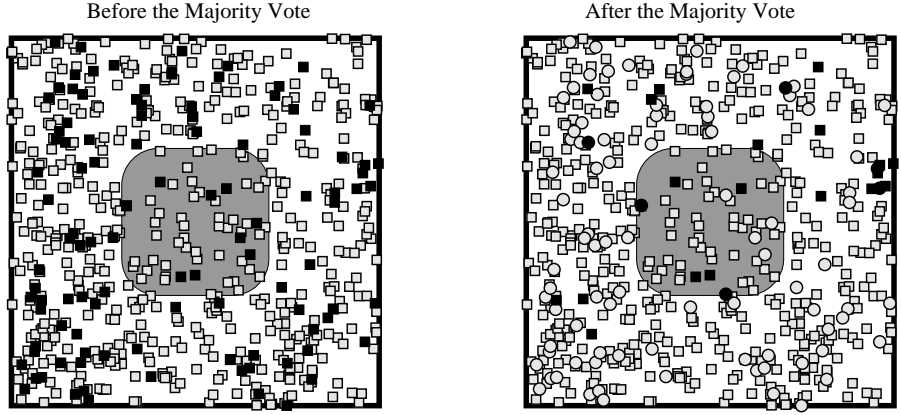}
\caption{$600$ sensors are distributed uniformly in a unit square $Y$, and the gray-colored region is the event region $X$. Among $600$ sensors, $106$ sensors (marked by black boxes in left figure) made wrong initial measurements, but $86$ sensors, $81\%$ of misclassified ones, revised correctly (marked by gray discs in the right figure), and the other $20$ sensors (marked by black boxes) remained still in their wrong classification, but $6$ sensors (marked by black discs) turned from the correct classification to the wrong classification.}
\label{fig:intro}
\end{figure}

\par
Figure~\ref{fig:intro} shows a simulation on the majority vote scheme
with $600$ sensors distributed uniformly in a unit square $Y$. The
event region $X$ is a square of side length $0.4$ with rounded corners of the curvature radius $0.1$. The measurement error probability $p = 0.15$ and the neighborhood radius $r=0.05$. The error correction rate by this scheme is about $76\%$.
From this simulation, one can observe that the total error of event
detection is significantly reduced by the majority vote scheme. In fact,
the error depends on several sensing parameters such as the measurement
error $p$ and the neighborhood radius $r$, and some geometric parameters
of the event region $X$ such as the convexity, the perimeter and the
boundary curvature.

In this paper, we analyze the majority scheme
further, and make explicit and precise the dependency on these parameters. To analyze this, we first need an assumption on the distribution of sensors in the region of interest $Y$. We adapt the most important model; the sensors are randomly distributed by a Poisson process of density $\lambda$, which is independent from the measurement error $p$ that the sensors make. To our best knowledge, this gives the first bounds on the expected number of incorrectly classified sensors in the majority vote scheme with all these parameters, $\lambda$, $p$, $r$, and the shape of $X$.

\par
It should be noticed that our sensors are point sensors; there is
no sensing range, but a yes/no decision about the situation at the sensor.
Many other papers have dealt with continuous-valued sensors, but then the dependence of a sensor's decision on the neighboring values is much less clear. Also, for many practical applications a yes/no decision is ultimately the
desired answer: `does it rain?', `is there a forest fire?', etc.
The distance $r$ within which we compare the sensor values is
not related to the communication distance of the network nodes;
it is a choice made depending on our a-priori knowledge on the
size and shape of $X$, that is, choosing the right radius $r$ is an important aspect.
Finally, our majority vote scheme does not need absolute positions
of the sensors, but it is not efficient in detecting thin and long event regions; to identify such event regions we would need sensors with
known positions with the help of GPS units and by more complicated decision algorithms.

\subsection{Previous Work}
Local event boundary detection problem has been studied in several previous papers.

Chintalapudi and Govindan~\cite{CG03} were the first to analyze the local
event boundary detection problem. They proposed three different types of
algorithms; among them a simple neighborhood counting scheme that does
not use the position information of the neighboring sensors, and a scheme
that finds the optimum line separating in the neighborhood the event-sensors
from the no-event-sensors.
They found by simulation that the separating-line scheme performs best,
but provided no analysis of that scheme, and assumed for the other schemes
that the event boundary is a straight line.
\par
Krishnamachari and Iyengar~\cite{KI04} discussed a model with a similar
counting scheme, not using the neighbor's position information; in their
simulation, however, the sensors are always distributed in a square grid.
Wu et al.~\cite {WCDXLD07} discussed continuous-valued sensors, looking
for threshold events, and proposed several methods
based on comparing a sensor's value with the median of a set of neighbor's
values to identify faulty or boundary sensors. Similar was the discussion
by Jin and Nittel~\cite{JN06}, who used the mean instead of the median.
Ding and Cheng~\cite{DC09} fit a mixture of multivariate gaussian
distributions to the observed sensor values and decided on the base
of that fitted model which sensors are boundary sensors.

\par
Wang et al.~\cite{WGM06} considered a model that can be interpreted as
only no-event sensors being available, e.g., because the event like the fire in the forest destroyed
all sensors in its region; they reconstruct a boundary of the regione
containing sensors, based on neighbor connectivity information without
the neighbor positions. Nowak and Mitra~\cite {NM03} use a non-local
communication model based on a hierarchical partitioning scheme to
identify event boundaries.
\par
A different line of related works contains the fusion of different
information sources for the same event; e.g., combining
the output of multiple classifiers in a pattern-recognition problem.
This has been studied in \cite{LaS97,KiH*98,AlK99,CC01,Ku02,Na03},
but that problem abstracts from the geometric structure which is the
core of our considerations.
\par
Yet another related line of works contains the opinion formation
in social networks: instead of spatially related nodes, we have
persons with friends, and a person might change his opinion
to conform to the majority opinion of his group. This has been
studied both as social dynamics and as abstract process on graphs,
see, e.g., \cite{Ag*88,MuP01,Pe02,Kr02,MuP04,Zo10}.
Majority vote among neighbors has also been studied as
model for some spin systems in physics, see e.g.,
\cite{Ol92,Li*05,Ya*08,Cu*10}.

\subsection{Our Results}

For a set $A$ in the plane, we denote its area, perimeter, boundary and number of components, by $\area(A), \peri(A)$, $\bd(A)$ and $\components(A)$, respectively.

\par
We assume that the set $S$ of sensors is generated by a Poisson process of density $\lambda$ on our region of interest $Y$. Within $Y$, an event happens in the region $X$. Each sensor $s\in S$ makes a 0-1 event detection (or measurement), whether $s \not\in X$ or $s \in X$, which might be incorrect with probability $p$. The sensor errors are independent from each other, and from the Poisson process placing the sensors. Each sensor knows the measurement results of all other sensors within radius $r$ and revises his own measurement based on that information.

\par
The comparison with neighboring sensors gives information only if the sensor has neighboring sensors. The expected number of neighbors in this model is $\lambda\pi r^2$ and thus the probability that a sensor has no neighbors is $e^{-{\lambda\pi r^2}}$, which should be much smaller than the measurement error $p$. The expected number of sensors in $Y$ is $\lambda\area(Y)$, so without correction by neighborhood comparison, the expected number of incorrect sensors, i.e., misclassified sensors, is $\lambda\area(Y)p$. Theorem~\ref{thm:general_outZ} and Theorem~\ref{thm:general_inZ} show that the expected number of misclassified sensors is improved significantly by the majority vote scheme.

\par
Let $Z_r$ be the set of points within distance $r$ to the boundary of $X$. Any sensor in this dubious region $Z_r$ has potentially neighbors inside and outside $X$, in other words, we possibly have both of correct 0- and 1-answers within the same neighborhood, which would lead such sensors to make the wrong decision after the majority vote. Thus the analysis on the expected number of misclassified sensors in $Z_r$ is a key in the majority vote scheme.

In Section~\ref{sec:analysis}, we analyze the majority vote scheme in a most general setting and derive the following bounds on the expected number of misclassified sensors in $Y\setminus Z_r$ and on the expected number of misclassified sensors in $Z_r$.
\begin{theorem}\label{thm:general_outZ}
For $p \leq 1/2$, the expected number of sensors in $Y \setminus Z_r$ that are misclassified by the simple majority rule in the neighborhood of radius $r$ is at most
$$2\lambda \sqrt{p(1-p)}e^{-(1-2\sqrt{p(1-p)})\lambda\pi r^2}\area(Y\setminus Z_r)$$
and at least
$$\frac{\sqrt{p(1-p)}}{4\pi r^2} \left(e^{-(1-2\sqrt{p(1-p)})\lambda\pi r^2}-e^{-\lambda\pi r^2}\right)\area(Y\setminus Z_r).$$
\end{theorem}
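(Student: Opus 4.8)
The plan is to reduce both bounds to the single-sensor misclassification probability and then average over the Poisson number of neighbors. Fix a sensor $s\in Y\setminus Z_r$. Because $s$ is at distance more than $r$ from $\bd(X)$, the disk $B(s,r)$ lies entirely on one side of $\bd(X)$, so every neighbor of $s$ carries the same true label as $s$ and the measurement errors are i.i.d.\ Bernoulli$(p)$. If $s$ has $k$ neighbors and $W\sim\mathrm{Bin}(k,p)$ counts the erroneous ones, the majority rule (with $s$ breaking ties) misclassifies $s$ precisely when $W>k/2$, or when $W=k/2$ and $s$ itself errs, so its misclassification probability is $q_k=\Pr[W>k/2]+p\,\Pr[W=k/2]$. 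By Slivnyak's theorem the neighbor count is $\mathrm{Pois}(\mu)$ with $\mu=\lambda\pi r^2$ (using that $X$, hence $Z_r$, is far from $\bd(Y)$, so $B(s,r)\subseteq Y$), and by the Mecke formula the expected number of misclassified sensors in $Y\setminus Z_r$ equals $\lambda\,\area(Y\setminus Z_r)\,\mathbb{E}_k[q_k]$ with $k\sim\mathrm{Pois}(\mu)$. Writing $\beta:=2\sqrt{p(1-p)}\le 1$, both bounds now follow from matching estimates on $\mathbb{E}_k[q_k]$.

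For the upper bound I would fold the tie-breaking vote into the count: if $s$ is misclassified then at least $\lceil (k+1)/2\rceil$ of the $k+1$ readings (the $k$ neighbors together with $s$ itself) are wrong, so $q_k\le\Pr[\mathrm{Bin}(k+1,p)\ge(k+1)/2]$. A one-line Chernoff estimate, minimizing $e^{-sn/2}(1-p+pe^{s})^{n}$ over $s\ge 0$ at $e^{s}=(1-p)/p$, gives $\Pr[\mathrm{Bin}(n,p)\ge n/2]\le\beta^{n}$, whence $q_k\le\beta^{k+1}$. Averaging, $\mathbb{E}_k[\beta^{k+1}]=\beta e^{-\mu}\sum_{k\ge0}(\beta\mu)^{k}/k!=\beta e^{-(1-\beta)\mu}$, and multiplying by $\lambda\,\area(Y\setminus Z_r)$ reproduces the stated upper bound.

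For the lower bound I would keep a single near-median binomial term. For odd $k$ there is no tie, and $q_k\ge\Pr[W=\tfrac{k+1}2]=\binom{k}{(k+1)/2}p^{(k+1)/2}(1-p)^{(k-1)/2}$; the bound $\binom{k}{(k+1)/2}\ge 2^{k}/(k+1)$ on the central coefficient together with $\sqrt{p/(1-p)}\ge\beta/2$ yields $q_k\ge\beta^{k+1}/(2(k+1))$. Dropping the nonnegative even-$k$ terms and summing against the Poisson weights, the factor $1/(k+1)$ is swallowed by $\tfrac{\mu^{k}}{k!}\cdot\tfrac1{k+1}=\tfrac1\mu\tfrac{\mu^{k+1}}{(k+1)!}$, which turns the odd-$k$ sum into an even-index one and gives $\mathbb{E}_k[q_k]\ge\frac{e^{-\mu}}{2\mu}\big(\cosh(\beta\mu)-1\big)$. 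The identity $\cosh y-1=\tfrac12(e^{y}-1)(1-e^{-y})$ then rewrites this as $\frac{1-e^{-\beta\mu}}{4\mu}\big(e^{-(1-\beta)\mu}-e^{-\mu}\big)$, and in the relevant many-neighbor regime $1-e^{-\beta\mu}\ge\beta/2$, so after multiplying by $\lambda\,\area(Y\setminus Z_r)$ (note $\lambda/\mu=1/(\pi r^2)$) we recover exactly the claimed lower bound; the small-$\mu$ range is covered instead by retaining the $k=0,1$ terms $q_0=q_1=p$.

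The upper bound is essentially routine once the fold-in-the-tie-breaker observation is made, since the Chernoff estimate and the Poisson average are standard. I expect the main obstacle to be the lower bound: one must match the exponential tail from below up to the unavoidable $1/(k+1)$ loss, which forces the central-binomial estimate and careful bookkeeping of the parity of $k$—only odd $k$ give a clean bound, the even terms being genuinely smaller by a factor $\approx p/\beta$. Reassembling these pieces into the exact form $e^{-(1-\beta)\mu}-e^{-\mu}$, rather than a messier expression, rests on the $\cosh$ identity above and on the regime assumption $\mu\gtrsim 1$; a secondary, more mundane issue is controlling the boundary-of-$Y$ effect when $B(s,r)\not\subseteq Y$, which the far-from-$\bd(Y)$ hypothesis is there to neutralize.
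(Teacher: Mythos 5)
Your proposal is correct, and on the upper bound it coincides with the paper's argument: fold the tie-breaking vote into $k+1$ Bernoulli trials, apply the Chernoff estimate $\Pr[\mathrm{Bin}(n,p)\ge n/2]\le (2\sqrt{p(1-p)})^{n}$, and sum the Poisson series to get $\beta e^{-(1-\beta)\mu}$ with $\beta=2\sqrt{p(1-p)}$ and $\mu=\lambda\pi r^2$. Your lower bound, however, takes a genuinely different route. The paper keeps \emph{both} parities: the misclassification probability is $B(k)$ for odd $k$ and exactly $B(k+1)$ for even $k$, it uses $B(k)\ge\frac{1}{2}B(k+1)$ for odd $k$, and then invokes Lemma~\ref{lem:Bn}, $B(n)\ge\frac{\sqrt{p(1-p)}}{2n}\beta^{n}$ (proved via the imported central-binomial estimate $\binom{2m}{m}\ge 4^m/(4\sqrt{m})$), so the Poisson average yields the \emph{full} series $\sum_{k\ge 0}(\beta\mu)^{k+1}/(k+1)! = e^{\beta\mu}-1$ and the stated constant uniformly in $\mu$, with no case analysis. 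You instead discard the even-$k$ terms and bound the single central term by the elementary $\binom{k}{(k+1)/2}\ge 2^k/(k+1)$; this leaves only $\cosh(\beta\mu)-1$, which for small $\mu$ is of order $\beta\mu$ smaller than $e^{\beta\mu}-1$, and that is exactly why you are forced into the regime split. I checked that your split closes: by concavity of $\beta\mapsto 1-e^{-\beta\mu}-\beta/2$ and its values at $\beta=0,1$, the inequality $1-e^{-\beta\mu}\ge\beta/2$ holds for all $\beta\in(0,1]$ once $\mu\ge\ln 2$, while for $\mu\le 1$ retaining $k\in\{0,1\}$ with $q_0=q_1=p$ gives $pe^{-\mu}(1+\mu)\ge \frac{\beta}{8\mu}(e^{-(1-\beta)\mu}-e^{-\mu})$, using $e^{\beta\mu}-1\le\beta\mu e^{\mu}$, $\beta^2=4p(1-p)$, and $2(1+\mu)\ge e^{\mu}$ on $[0,1]$; the two ranges overlap, so the sketched patch is genuine, not hand-waving. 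What each route buys: the paper's is shorter and uniform in $\mu$ once its Lemma~\ref{lem:Bn} is available, and it harvests the even-$k$ mass as well; yours is self-contained, and your exact formula $q_k=\Pr[W>k/2]+p\Pr[W=k/2]$ treats tie-breaking more cleanly than the paper's odd-$k$ relation $B(k)=\frac{1}{2(1-p)}B(k+1)$, which is in fact false as an identity (check $k=1$) and valid only as the two inequalities $\frac{1}{2}B(k+1)\le B(k)\le B(k+1)$ actually used. Your explicit Mecke/Slivnyak justification of the reduction to a single sensor, and your caveat about neighborhoods clipped by $\bd(Y)$, make precise steps the paper passes over silently.
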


\begin{theorem}\label{thm:general_inZ}
For $p \leq 1/2$, the expected number of sensors in $Z_r$ that are misclassified by the simple majority rule in the neighborhood of radius $r$ is at most
$$2\lambda r\peri(X) + \lambda \pi r^2\components(X).$$
There exists some event region $X$ such that the expected number of misclassified sensors in $Z_r$ is at least $\Omega(\lambda r\peri(X))$.
\end{theorem}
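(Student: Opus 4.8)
The plan is to prove the upper bound by the trivial observation that a misclassified sensor in $Z_r$ is in particular a sensor in $Z_r$, and then reduce everything to a purely geometric estimate of $\area(Z_r)$. First I would bound the expected number of misclassified sensors in $Z_r$ from above by the expected total number of sensors in $Z_r$, which by the Poisson assumption equals $\lambda\,\area(Z_r)$. The whole task then reduces to the tube estimate
$$\area(Z_r)\ \le\ 2r\,\peri(X)+\pi r^2\,\components(X).$$
To establish this I would split $Z_r$ into its inner part $I_r=X\setminus(X\ominus D_r)$ and its outer part $O_r=(X\oplus D_r)\setminus X$, where $D_t$ denotes the disk of radius $t$, and integrate the perimeters of the parallel bodies: $\area(I_r)=\int_0^r \peri(X\ominus D_t)\,dt$ and $\area(O_r)=\int_0^r \peri(X\oplus D_t)\,dt$. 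Using that erosion does not increase the perimeter, $\peri(X\ominus D_t)\le\peri(X)$, and that dilation increases it by at most the total turning of the outer boundary, $\peri(X\oplus D_t)\le\peri(X)+2\pi t\,\components(X)$, the two integrals contribute at most $r\,\peri(X)$ and $r\,\peri(X)+\pi r^2\,\components(X)$; adding them and multiplying by $\lambda$ yields the claimed upper bound.

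For the lower bound I would exhibit a single small disk. Let $X$ be a disk of radius $r/2$ placed well inside $Y$, far from $\bd(Y)$, so that $\peri(X)=\pi r$ and the entire disk lies in $Z_r$, since every point of $X$ is within distance $r/2<r$ of $\bd(X)$. The key geometric fact is that for \emph{every} sensor $s\in X$ the part of its radius-$r$ neighborhood lying inside $X$ has area at most $\area(X)=\pi(r/2)^2$, so in expectation at most a fraction $1/4$ of its neighbors are truly inside $X$. Hence for such a truly-inside sensor the expected fraction of neighbors voting $1$ is at most $p+\tfrac14(1-2p)=\tfrac14+\tfrac{p}{2}$, which for $p<1/2$ is bounded below $1/2$ by the margin $(1-2p)/4>0$. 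Thus the noisy majority of its neighbors favors the wrong answer $0$, and I would conclude that each such sensor is misclassified with probability at least some constant $c(p)>0$.

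To finish, the expected number of truly-inside sensors is $\lambda\,\area(X)=\lambda\pi r^2/4=\Theta(\lambda r\,\peri(X))$; since each lies in $Z_r$ and is misclassified with probability at least $c(p)$, linearity of expectation gives $\Omega(\lambda r\,\peri(X))$ misclassified sensors in $Z_r$, as required. (At $p=1/2$ every sensor's vote is an unbiased coin, so the bound holds trivially; the interesting case is $p$ bounded away from $1/2$, and the hidden constant is allowed to depend on $p$.)

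The main obstacle in the upper bound is making the two parallel-body perimeter inequalities rigorous for a general event region rather than a convex or smooth one; this is where I would require the boundary to be rectifiable (equivalently, take $X$ of positive reach for small $r$) and invoke Steiner-type tube formulas together with Gauss--Bonnet to control the turning contribution $2\pi t\,\components(X)$. In the lower bound the only delicate point is the uniform constant $c(p)$: for sensors with many neighbors a Chernoff bound against the mean $\tfrac14+\tfrac{p}{2}<\tfrac12$ makes misclassification overwhelmingly likely, while for the few sensors with very small neighbor count one needs a separate elementary anti-concentration estimate, together with a routine check that tie-breaking by the sensor's own, possibly erroneous, measurement only affects the probability by a constant factor.
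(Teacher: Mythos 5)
Your proposal is correct, and for the upper bound it is essentially the paper's own proof: the paper likewise bounds the misclassified sensors in $Z_r$ by \emph{all} sensors in $Z_r$, i.e.\ by $\lambda\area(Z_r)$, but it simply asserts the geometric estimate $\area(Z_r)\le 2r\peri(X)+\pi r^2\components(X)$ without proof, whereas you supply one via the coarea/Steiner decomposition into inner and outer parallel bodies. That is a genuine addition, with the caveat you yourself flag: $\peri(X\ominus D_t)\le\peri(X)$ and $\peri(X\oplus D_t)\le\peri(X)+2\pi t\,\components(X)$ need boundary regularity (rectifiable boundary, positive reach) and can fail for pathological $X$ --- but so can the paper's asserted bound, so you are no worse off. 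For the lower bound your route genuinely differs: the paper takes $X$ to be a thin $4r\times\frac{r}{2}$ rectangle (perimeter $9r$, so $X\subset Z_r$) and argues that every sensor in the middle $2r\times\frac{r}{2}$ strip has in expectation at least half of its neighbors reporting ``not in $X$'' and hence ``makes a wrong decision,'' yielding $\lambda r^2=\frac19\lambda r\peri(X)$ misclassified sensors; you take a disk of radius $r/2$, so that $\area(A\cap X)/\area(A)\le\frac14$ for every sensor of $X$, and --- more carefully than the paper --- you insist that a wrong \emph{expected} majority only gives misclassification with some probability $c(p)>0$. That extra care addresses a real gap the paper glosses over, and it closes more simply than your Chernoff-plus-anti-concentration plan: conditioned on $k\ge 1$ neighbors, Markov's inequality bounds the probability that at least half vote ``in $X$'' by $\bigl(\frac14+\frac{p}{2}\bigr)/\frac12=\frac12+p$, so $s$ is misclassified with probability at least $\frac12-p$ regardless of tie-breaking, while a sensor with $k=0$ neighbors errs with probability $p$; hence $c(p)=\min\bigl(p,\frac12-p\bigr)$ works uniformly, with no separate small-$k$ analysis. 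Both constructions have $\peri(X)=\Theta(r)$, which suffices for the existential statement of the theorem; note, however, that the paper additionally exhibits a comb-like non-convex region with curvature radius at least $r$ and arbitrarily large perimeter that still forces $\Omega(\lambda r\peri(X))$ errors, a stronger message motivating the convexity hypothesis of Theorem~\ref{thm:convex}, which your single disk does not reproduce (nor is it required to).
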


The ratio of the upper bound to the lower bound in Theorem~\ref{thm:general_outZ} grows linearly with the expected number of neighbors $\lambda\pi r^2$. However, since $2\sqrt{p(1-p)}\leq 1$ for any $p\geq 0$, they both decrease exponentially with the expected number of neighbors $\lambda\pi r^2$, so the expected number of misclassified sensors outside $Z_r$ decreases exponentially with the expected number of neighbors $\lambda\pi r^2$. Theorem~\ref{thm:general_inZ} tells us that the expected number of sensors in $Z_r$ grows with the parameters $\lambda$ and $r$, and the perimeter of $X$. Thus a region $X$ with long boundary or with many components would be the worst in the majority vote scheme. Moreover such worst examples exist. As a result, Theorems~\ref{thm:general_outZ} and~\ref{thm:general_inZ} illustrate the trade-off between the error outside $Z_r$, which decreases exponentially with $\lambda$ and $r$, and the error inside $Z_r$, which increases with $\lambda$ and $r$.

\par
Sensors very near to the boundary of $X$ can be unavoidably misclassified according to Theorem~\ref{thm:general_inZ}. If $X$ is thin so that $X \subset Z_r$, then there are no sensors sufficiently deep inside $X$ whose neighbors are mainly inside $X$, so the region will not be recognized by the majority vote scheme. We thus need to make some (seemingly strong) assumptions on the shape of $X$ such that $X\not\subset Z_r$ is guaranteed. In this paper, we consider $X$ as a convex event region with a bounded curvature, i.e., with sufficiently rounded boundary. For such $X$, in Section~\ref{sec:convex}, we prove a significantly improved upper bound on the expected number of misclassified sensors in $Z_r$, which is a main result in this paper.

\begin{theorem}\label{thm:convex}
Let $p\leq 1/2$. If the event region $X$ is convex and the radius of curvature at each point on the boundary is at least $r$, then the expected number of sensors in $Z_r$ that are misclassified by the simple majority rule in the neighborhood of radius $r$ is less than
$${\pi\sqrt{\lambda}\over \sqrt{2}(1-2p)} \peri(X) + 3 \lambda\pi r^2 \ln{\peri(X)\over r}.$$
\end{theorem}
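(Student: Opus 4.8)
The plan is to bound, for a single sensor, its probability of being misclassified as a function only of its signed distance $d$ to $\bd(X)$, and then to integrate this probability against the Poisson intensity over the two sides of $Z_r$. Fix a sensor $s$ and let $A_{\rm in}$ and $A_{\rm out}$ be the areas of the parts of the disk of radius $r$ about $s$ lying inside and outside $X$, so $A_{\rm in}+A_{\rm out}=\pi r^2$. By the thinning property of the Poisson process, the neighbors that \emph{vote} ``in'' (those inside $X$ reporting correctly together with those outside reporting in error) form a Poisson variable $V_{\rm in}$ of mean $\mu_{\rm in}=\lambda[(1-p)A_{\rm in}+pA_{\rm out}]$, the neighbors voting ``out'' form an \emph{independent} Poisson variable $V_{\rm out}$ of mean $\mu_{\rm out}=\lambda[pA_{\rm in}+(1-p)A_{\rm out}]$, and a sensor truly inside $X$ is misclassified only if $V_{\rm in}\le V_{\rm out}$. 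The Chernoff bound for the difference of two independent Poisson variables gives
$$\Pr[\,V_{\rm in}\le V_{\rm out}\,]\le\exp\!\big(-(\sqrt{\mu_{\rm in}}-\sqrt{\mu_{\rm out}})^2\big)\le\exp\!\Big(-\tfrac{\lambda(1-2p)^2(A_{\rm in}-A_{\rm out})^2}{2\pi r^2}\Big),$$
using $\mu_{\rm in}+\mu_{\rm out}=\lambda\pi r^2$, $\mu_{\rm in}-\mu_{\rm out}=\lambda(1-2p)(A_{\rm in}-A_{\rm out})$, and $(\sqrt a-\sqrt b)^2\ge (a-b)^2/(2(a+b))$. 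The same bound holds outside $X$ by symmetry, so everything reduces to a lower bound on the geometric ``signal'' $|A_{\rm in}-A_{\rm out}|$.

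\emph{Geometry of the signal.} For a sensor at distance $d$ \emph{outside} $X$, convexity places $X$ inside the half-plane bounded by the tangent line at the nearest boundary point, so $A_{\rm in}$ is at most the circular segment cut at distance $d$ and $|A_{\rm in}-A_{\rm out}|\ge 2r^2\sin^{-1}(d/r)+2d\sqrt{r^2-d^2}\;(\approx 4rd)$. For a sensor \emph{inside} $X$ the curvature hypothesis is exactly what is needed: a radius-$r$ ball rolls freely inside $X$ and touches $\bd(X)$ at the nearest point, so $A_{\rm in}$ is at least a computable lens area; more sharply, writing $\rho\ge r$ for the local radius of curvature, the signal behaves like $4r(d-d^*(\rho))$ with a ``dead depth'' $d^*(\rho)=\Theta(r^2/\rho)$ inside which the lower bound on $A_{\rm in}-A_{\rm out}$ degenerates.

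\emph{Integration over $Z_r$.} Because the radius of curvature is at least $r$, the foot-point/distance map is a bijection on each side of $Z_r$ for $d\in[0,r]$, and the parallel curves at depth $d$ have length $\peri(X)\mp 2\pi d$ by the Steiner formula. Writing the expected count as $\lambda\int_0^r \Pr[\text{mis}](d)\,(\peri(X)\mp2\pi d)\,dd$, the contribution of the transition layer---where the probability falls from $\tfrac12$ to $0$ over a depth $\sim 1/\sqrt\lambda$---is a Gaussian-type integral $\int_0^\infty\exp(-c\lambda r^2 d^2)\,dd=\Theta(1/\sqrt\lambda)$ which, times $\lambda\,\peri(X)$, produces the first term $\frac{\pi\sqrt\lambda}{\sqrt2(1-2p)}\peri(X)$. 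The inside ``dead layer'' of depth $d^*(\rho)$ is controlled by decomposing $\bd(X)$ into $O(\log(\peri(X)/r))$ dyadic curvature scales; on each scale the total turning is at most $2\pi$, so $\int d^*(\rho)\,d\ell\lesssim r^2$ there, and summing over scales yields $O(\lambda r^2\log(\peri(X)/r))$, the second term, while the $\mp2\pi d$ Steiner corrections add only a further $O(\lambda r^2)$.

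\emph{Main obstacle.} The delicate part is the inside analysis near boundary arcs whose radius of curvature is close to $r$: there the rolling-ball signal bound is genuinely weak (even vacuous for the smallest $d$), and one must show the resulting dead layer is thin enough in aggregate. The clean way is to replace the global radius-$r$ rolling ball by the local curvature radius $\rho$, control $d^*(\rho)=\Theta(r^2/\rho)$, and then spend the total-curvature budget $\int\kappa\,d\ell=2\pi$---but spent scale by scale, which is precisely where the logarithmic factor $\ln(\peri(X)/r)$ enters this argument. Matching the stated constants (recovering the clean $\pi/\sqrt2$ and the factor $3$) then amounts to keeping the Gaussian tail estimate and the Steiner corrections honest.
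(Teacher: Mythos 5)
Your probabilistic core is sound and in fact slightly cleaner than the paper's: the Poisson-thinning/Skellam--Chernoff bound $\Pr[V_{\rm in}\le V_{\rm out}]\le e^{-(\sqrt{\mu_{\rm in}}-\sqrt{\mu_{\rm out}})^2}$ is exactly the paper's Lemma~\ref{lem:good0} exponent, since $(\sqrt{\mu_{\rm in}}-\sqrt{\mu_{\rm out}})^2=\lambda\pi r^2\bigl(1-\sqrt{1-(1-2p)^2(2\alpha-1)^2}\bigr)$, obtained there by conditioning on $k$ and applying the binomial Chernoff bound; and your outside half-plane signal plus the Gaussian integration over parallel curves reproduces the first term much as in the paper (where, note, the $\mp 2\pi\delta r$ Steiner corrections cancel exactly between the inner and outer integrals, so no leftover $O(\lambda r^2)$ needs to be absorbed into the fixed constants of the statement).

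The genuine gap is the dead-layer accounting, precisely the part you flag as the ``main obstacle.'' First, the pointwise curvature radius $\rho$ at the foot point is the wrong scale: a sensor at depth $d$ is in trouble whenever $\bd(X)$ turns substantially \emph{anywhere} inside its radius-$r$ disk, so the dead set smears sideways by $\Theta(r)$ past each high-curvature arc; a sensor whose foot point lies on a straight segment ($\rho=\infty$, hence $d^*(\rho)=0$ in your formula) but which sits within distance $r$ of a sharp turn can still be misclassified with probability near $1$, and your bound ``signal $\approx 4r(d-d^*(\rho))$'' is false for it. Second, your bookkeeping is internally inconsistent: the turning budget $\int_{\bd(X)}\kappa\,d\ell=2\pi$ is global, not per dyadic scale, so your own inequality gives $\int_{\bd(X)} d^*(\rho)\,d\ell=\Theta(r^2)\int\kappa\,d\ell=O(r^2)$ summed over \emph{all} scales at once --- no logarithm emerges from it, which signals that $\int d^*\,d\ell$ is not the dead area. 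The logarithm in a correct proof comes from the smearing: a bad point at depth $\delta r$ forces a direction change of at least $2\arctan\delta\ge\frac{\pi}{2}\delta$ of $\bd(X)$ within its disk, so the parallel curve $C_\delta$ carries at most $4/\delta$ disjoint bad clusters, and each cluster --- regardless of how short the turning arc itself is --- contributes bad length $\Theta(r)$; the paper makes this precise by picking bad points on $C_\delta$ at pairwise arc-distance $\ge r$ and covering all bad points with half-disk arcs of length at most $\pi r/3$ (the curvature constraint bounding each arc), yielding total bad length at most $\min\left(\frac{3\pi r}{\delta},\,\peri(X)-2\pi\delta r\right)$, whose integral over $\delta\in(0,1]$ gives $3\lambda\pi r^2\bigl(1+\ln\frac{\peri(X)}{3\pi r}\bigr)\le 3\lambda\pi r^2\ln\frac{\peri(X)}{r}$. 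Your sketch replaces this cluster-counting and covering argument with the $d^*(\rho)$ heuristic, so the second term of the theorem is not established as written; to close the gap you must measure ``badness'' by the turning of $\bd(X)$ across the whole $r$-window (as in the paper's antipodal-point test) and control overlap of the resulting bad arcs along each $C_\delta$.
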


Finally, in Section~\ref{sec:experiment}, we perform some simulation for convex and round event regions and check the effect of the various parameters in the majority vote scheme such as $p$, $r$, $\lambda$, and the perimeter of $X$, and present a refinement method to improve the performance particularly for the tricky cases, i.e., for small $r$ and large $p$.

\section{Analysis for General Event Regions}~\label{sec:analysis}

In this section we analyze the simple majority rule and prove Theorem~\ref{thm:general_outZ} and Theorem~\ref{thm:general_inZ}. Throughout the paper, we will use the following lemma.

\begin{lemma}\label{lem:Bn}
For $p \leq 1/2$, the probability $B(n)$ of at least $\lceil{n\over 2}\rceil$ successes among $n$ independent Bernoulli trials of success probability $p$ is
$$ \frac{\sqrt{p(1-p)}}{2n}\left(2\sqrt{p(1-p)}\right)^{n} \leq B(n) \leq \left( 2\sqrt{p(1-p)} \right)^n.$$
\end{lemma}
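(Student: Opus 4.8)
The plan is to work directly with the binomial tail
$$B(n)=\sum_{k=\lceil n/2\rceil}^{n}\binom{n}{k}p^{k}(1-p)^{n-k},$$
and to rewrite each summand by factoring out the ``balanced'' weight $(p(1-p))^{n/2}$:
$$\binom{n}{k}p^{k}(1-p)^{n-k}=\binom{n}{k}\bigl(p(1-p)\bigr)^{n/2}\left(\frac{p}{1-p}\right)^{k-n/2}.$$
This single reparametrization is what drives both inequalities, since the target bounds are themselves powers of $2\sqrt{p(1-p)}=2(p(1-p))^{1/2}$. The hypothesis $p\le 1/2$ enters only through the two consequences $p/(1-p)\le 1$ and $\sqrt{p(1-p)}\le 1/2$ (together with the trivial $1-p\le 1$), and I would isolate those facts up front.

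For the upper bound I would note that in the summation range $k\ge\lceil n/2\rceil\ge n/2$, so the exponent $k-n/2$ is nonnegative; combined with $p/(1-p)\le 1$ this makes the last factor at most $1$. Hence every summand is at most $\binom{n}{k}(p(1-p))^{n/2}$, and summing over $\lceil n/2\rceil\le k\le n$ while crudely bounding $\sum_{k}\binom{n}{k}\le 2^{n}$ yields $B(n)\le 2^{n}(p(1-p))^{n/2}=(2\sqrt{p(1-p)})^{n}$, which is the claimed upper bound.

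For the lower bound I would discard all terms but the single central one at $k=\lceil n/2\rceil$. This coefficient $\binom{n}{\lceil n/2\rceil}$ is the largest of the $n+1$ binomial coefficients summing to $2^{n}$, so $\binom{n}{\lceil n/2\rceil}\ge 2^{n}/(n+1)\ge 2^{n}/(2n)$. The residual factor $(p/(1-p))^{k-n/2}$ equals $1$ when $n$ is even and $\sqrt{p/(1-p)}$ when $n$ is odd, so the comparison against the target $\frac{\sqrt{p(1-p)}}{2n}(2\sqrt{p(1-p)})^{n}=\frac{2^{n}}{2n}(p(1-p))^{(n+1)/2}$ splits into two parity cases. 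In each case the powers of $p(1-p)$ cancel and the inequality reduces to a purely numeric comparison of the form $\frac{2n}{n+1}\cdot(\text{factor})\ge 1$, where the factor is $1/\sqrt{p(1-p)}\ge 2$ in the even case and $1/(1-p)\ge 1$ in the odd case.

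I expect the parity bookkeeping in the lower bound to be the only real obstacle: one must keep the extra half-power of $p$ straight for odd $n$ and confirm that the constant still works. The point that makes this routine is the slack $1/(2n)$ in the statement versus the tight $1/(n+1)$ coming from the maximal-coefficient estimate; that gap is exactly what absorbs the parity-dependent factor, so no sharper control of the central coefficient (e.g. a Stirling estimate) is needed.
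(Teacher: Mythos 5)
Your proof is correct, and it is genuinely more self-contained than the paper's, though the skeleton of the lower bound is the same. For the upper bound the paper simply cites the Chernoff inequality (Hagerup--R\"ub), whereas you reprove it from scratch via the reparametrization $\binom{n}{k}p^k(1-p)^{n-k}=\binom{n}{k}\bigl(p(1-p)\bigr)^{n/2}\bigl(p/(1-p)\bigr)^{k-n/2}$ together with $p/(1-p)\le 1$ on the summation range and $\sum_k\binom{n}{k}\le 2^n$; this is exactly the standard elementary derivation of that Chernoff bound, so nothing is lost. For the lower bound both arguments retain only the single term $k=\lceil n/2\rceil$, but the supporting estimates differ: the paper sets $m=\lceil n/2\rceil$, reduces to $\binom{2m-1}{m}p^m(1-p)^m\ge\frac{1}{2}\binom{2m}{m}p^m(1-p)^m$ and invokes the Stirling-type bound $\binom{2m}{m}\ge 4^m/(2\sqrt{m})$ from Matou\v{s}ek--Vondr\'ak, which in fact yields the sharper factor $1/(2\sqrt{2n})$ before being weakened to the stated $1/(2n)$; you instead use the crude max-coefficient bound $\binom{n}{\lceil n/2\rceil}\ge 2^n/(n+1)\ge 2^n/(2n)$ and absorb the parity-dependent residual ($1$ for even $n$, $\sqrt{p/(1-p)}$ for odd $n$) via $1/\sqrt{p(1-p)}\ge 2$ and $1/(1-p)\ge 1$ respectively --- I verified both parity cases and they close. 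Your observation that the $1/(2n)$ slack in the statement makes any Stirling-type control of the central coefficient unnecessary is exactly right, and it is what lets you dispense with both external references; the trade-off is that the paper's route actually proves a slightly stronger inequality (a $\Theta(1/\sqrt{n})$ rather than $\Theta(1/n)$ loss against $(2\sqrt{p(1-p)})^n$), while yours is elementary throughout and valid verbatim for all $n\ge 1$ (for $n=0$ the stated bound is degenerate anyway, and $p=0$ makes both sides trivially $0$ for $n\ge 1$).
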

\begin{proof}
The upper bound can be easily derived by the Chernoff inequality proven in~\cite{HR90}. For the lower bound, let $m=\lceil{n\over 2}\rceil$. Then $n \geq 2m-1$. By simple arithmetic calculation, we can show that the probability of at least $m=\lceil{n\over 2}\rceil$  successes among $n$ independent Bernoulli trials of success probability $p$ is at least
\[
\left( \begin{array}{c} 2m-1 \\ m \end{array} \right) p^m(1-p)^m
\geq \frac{1}{2} \left( \begin{array}{c} 2m \\ m \end{array} \right) p^m(1-p)^m
\geq \frac{2^{2m}}{4\sqrt{m}}\, p^m(1-p)^m .
\]
The last inequality is given in~\cite{MV08}.
Applying $n \leq 2m \leq n+1$ to the last term of the above inequality, we get the lower bound we wanted as follows:
\[
\frac{2^{2m}}{4\sqrt{m}}\, p^m(1-p)^m
\geq \frac{2^n}{2\sqrt{2n}} \left(\sqrt{p(1-p)}\right)^{n+1}
\geq \frac{\sqrt{p(1-p)}}{2n}\left(2\sqrt{p(1-p)}\right)^n.
\]
\end{proof}

\subsection{Proof of Theorem~\ref{thm:general_outZ}}

Recall that $Z_r$ is the set of points of $Y$ within distance $r$ to $\bd(X)$, the boundary of $X$. The expected number of the misclassified sensors in $Y \setminus Z_r$ by the majority vote is \( \lambda \area(Y \setminus Z_r) \) times the probability of a sensor $s$ in $Y\setminus Z_r$ being  misclassified by the majority rule in the neighborhood of radius $r$.

Suppose that $s$ has $k$ neighbors. Since $s \in Y\setminus Z_r$, the $k$ neighbors of $s$ lie all inside $X$ or all outside $X$. The probability of $s \in Y\setminus Z_r$ being misclassified is the one that at least half of measurements of the neighbors should be erroneous. When $k$ is odd, at least $\lceil{k\over 2}\rceil$ errors among $k$ measurements must happen. But when $k$ is even, the measurement of $s$ can be served as a tie breaker, thus at least $\lceil{k+1\over 2}\rceil$ errors must happen among $k+1$ measurements including a measurement of $s$.

Let $B(k)$ be the probability that at least $\lceil{k\over 2}\rceil$ successes among $k$ trials with success probability $p\leq 1/2$. For odd $k$, it holds from binomial distribution that $B(k)= {1\over 2(1-p)}B(k+1)\leq B(k+1)$ because ${1\over 2(1-p)}\leq 1$ for $p\leq 1/2$. The probability of $s \in Y\setminus Z_r$ being misclassified is simplified as follows:
\begin{eqnarray*}\label{eq:interior0}
 & & \sum_{k=0}^{\infty} \Pr(\hbox{$s$ has $k$ neighbors})\Pr\left({\hbox{$s$ makes a wrong decision by majority rule}}\right)\nonumber\\
 & = & \sum_{\mathrm{odd~} k} \Pr(\hbox{$s$ has $k$ neighbors}) B(k) +
       \sum_{\mathrm{even~} k} \Pr(\hbox{$s$ has $k$ neighbors}) B(k+1) \\
 & = &  \sum_{\mathrm{odd~} k} \Pr(\hbox{$s$ has $k$ neighbors}){1\over 2(1-p)}B(k+1)
        +  \sum_{\mathrm{even~} k} \Pr(\hbox{$s$ has $k$ neighbors}) B(k+1) \\
 &\leq& \sum_{k=0}^{\infty} \Pr(\hbox{$s$ has $k$ neighbors}) B(k+1).
\end{eqnarray*}

The first probability is \({1\over k!}(\lambda\pi r^2)^k e^{-\lambda\pi r^2}\) by the definition of the Poisson process. The second probability $B(k+1)$ is at most $\left(2\sqrt{p(1-p)}\right)^{k+1}$ by Lemma~\ref{lem:Bn}. Thus we get the upper bound of the probability of $s \in Y\setminus Z_r$ being misclassified as follows:
\begin{eqnarray*}
\Pr(\hbox{$s \in Y\setminus Z_r$ is misclassified})&\leq&
\sum_{k=0}^\infty {1\over k!}(\lambda\pi r^2)^k e^{-\lambda\pi r^2}
\left( 2\sqrt{p(1-p)}\right)^{k+1} \nonumber \\
&\leq& 2\sqrt{p(1-p)} e^{-\lambda\pi r^2} \sum_{k=0}^\infty {1\over k!}\left(\lambda\pi r^2 \cdot 2\sqrt{p(1-p)}\right)^k \label{eq:interior1}\\
&\leq& 2\sqrt{p(1-p)} e^{-(1-2\sqrt{p(1-p)})\lambda\pi r^2}.
\end{eqnarray*}
Multiplying this with $\lambda\area(Y\setminus Z_r)$ gives the upper bound of the theorem.

\par
For the lower bound in Theorem~\ref{thm:general_outZ} have a similar inequality as in the upper bound as follows:
\begin{eqnarray*}
 & & \sum_{\mathrm{odd~} k} \Pr(\hbox{$s$ has $k$ neighbors}) B(k) +
       \sum_{\mathrm{even~} k} \Pr(\hbox{$s$ has $k$ neighbors}) B(k+1) \nonumber \\
 & = &  \sum_{\mathrm{odd~} k} \Pr(\hbox{$s$ has $k$ neighbors}){1\over 2(1-p)}B(k+1)
        +  \sum_{\mathrm{even~} k} \Pr(\hbox{$s$ has $k$ neighbors}) B(k+1) \nonumber \\
 &\geq& \frac{1}{2}\sum_{k=0}^{\infty} \Pr(\hbox{$s$ has $k$ neighbors}) B(k+1),
\end{eqnarray*}
where ${1\over 2(1-p)}\geq \frac{1}{2}$ for $p\leq 1/2$.

The second probability is at least $\frac{\sqrt{p(1-p)}}{2(k+1)}\left(2\sqrt{p(1-p)}\right)^{k+1}$ by Lemma~\ref{lem:Bn}. Thus we get the following lower bound of the probability of $s \in Y\setminus Z_r$ being misclassified, which proves the lower bound of the theorem.
\begin{eqnarray*}
\Pr(\hbox{$s \in Y\setminus Z_r$ is misclassified})&\geq&
\frac{\sqrt{p(1-p)}}{4} \sum_{k=0}^\infty {1\over (k+1)!}(\lambda\pi r^2)^k e^{-\lambda\pi r^2}
\left( 2\sqrt{p(1-p)}\right)^{k+1} \\
&\geq& \frac{\sqrt{p(1-p)}e^{-\lambda\pi r^2}}{4\lambda\pi r^2} \sum_{k=0}^\infty {1\over (k+1)!}\left(\lambda\pi r^2 \cdot 2\sqrt{p(1-p)}\right)^{k+1} \label{eq:interior2}\\
&\geq& {\sqrt{p(1-p)} \over 4\lambda\pi r^2} \left(e^{-(1-2\sqrt{p(1-p)})\lambda\pi r^2}-e^{-\lambda\pi r^2}\right).
\end{eqnarray*}

\subsection{Proof of Theorem~\ref{thm:general_inZ}}
For the upper bound of the theorem, we simply assume that any sensor in $Z_r$ always makes the wrong decision. The expected number of sensors in $Z_r$ is $\lambda\area(Z_r)$, and we have the geometric bound $\area(Z_r)\leq 2r\peri(X) +\pi r^2\cdot\components(X)$ for any general set $X$. Thus the expected number of misclassified sensors in $Z_r$ by the majority vote rule is at most $\lambda\left(2r\peri(X) + \pi r^2\cdot\components(X) \right)$.

\begin{figure}
\centering
    \includegraphics[width=8cm]{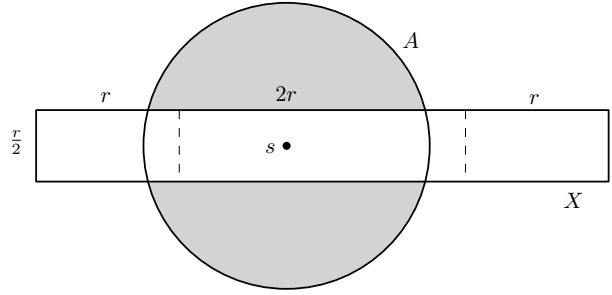}
\caption{A thin and long rectangle $X$.}
\label{fig:thin-lowerbound}
\end{figure}

\par
We now explain that this bound is asymptotically the best we can obtain for the expected number of misclassified sensors in $Z_r$ for any $X$ with $\components(X) = O (\peri(X)/r)$. Indeed, if $X$ is a thin and long rectangle of height $r/2$ and of width $4r$ as shown in Figure~\ref{fig:thin-lowerbound}, then all sensors in $X$ will be in $Z_r$, i.e., $X\subset Z_r$. The perimeter of $X$ is $9r$. Consider any sensor $s$ in $X$ at distance at least $r$ from the both vertical edges of $X$. Let $A$ be a disk of radius $r$ around $s$. Since the height of $X$ is $r/2$, $A$ consists of three parts as in Figure~\ref{fig:thin-lowerbound}; two circle segments of $A \setminus X$ and the middle part, $A\cap X$, between the circle segments. The expected numbers of sensors in $A\setminus X$ and $A\cap X$ whose initial measurement is ``not in X'' are $(1-p)\lambda\area(A\setminus X)$ and $p\lambda\area(A\cap X)$, respectively. Thus $s$ has at least $(1-p)\lambda\area(A\setminus X)+p\lambda\area(A\cap X)$ neighbors in $A$ whose initial measurement is ``not in $X$''. We can prove that this is at least half of the number of sensors in $A$, i.e., $\geq \frac{1}{2}\lambda\pi r^2$ for any $p \leq \frac{1}{2}$ by simple calculation. This results in making a wrong decision of $s$ by the majority vote scheme. The expected number of such misclassified sensors in $X$ is $\lambda r^2$, which is at least $\frac{1}{9}\lambda r(9r) = \frac{1}{9}\lambda r \peri(X) = \Omega(\lambda r \peri(X))$.

\begin{figure}
\centering
    \includegraphics[width=10cm]{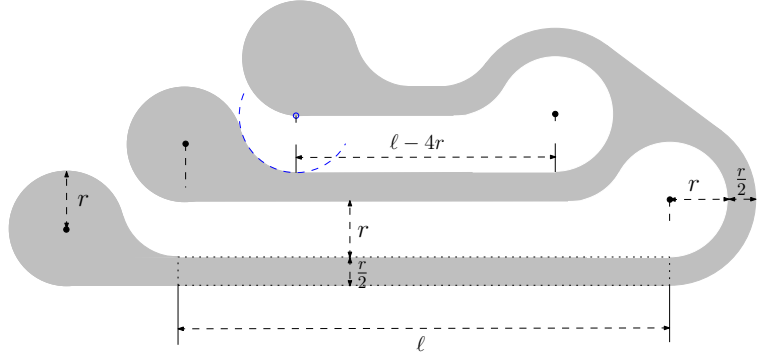}
\caption{A non-convex region $X$, satisfying that the radius of curvature is at least $r$ everywhere on the boundary.}
\label{fig:lowerbound}
\end{figure}

\par
We can also find such a worst example even when $X$ is not convex but has a round boundary satisfying some curvature constraint: the radius of curvature is at least $r$ everywhere on the boundary. Figure~\ref{fig:lowerbound} illustrates an event region $X$ of the curvature radius $r$, but not convex. All sensors in roughly $\lfloor \ell/4r \rfloor$ thin rectangular strips have a majority of neighbors outside $X$, thus they will make wrong decisions. Assume that $\ell$ is a multiple of $4r$, so $\ell \geq 4r$. Total area of thin strips is at least $(r/2)(\ell + (\ell-4r) + \ldots + 4r + 0) \geq \ell^2/16$. The boundary of $X$ consists of circular arcs at its both ends and linear segments of thin strips. Since the length of any circular arc is at most $3\pi r$, the total length of circular arcs is at most $6\pi r \ell/(4r)$. Then $\peri(X)\leq 6\pi r \ell/(4r) + 2(\ell + (\ell-4r) + \ldots + 4r)\leq (\ell^2/r)((3\pi/2 + 1)(r/\ell)+1/4) \leq 1.68\ell^2/r$ since $r/\ell \leq 1/4$. Thus the expected number of misclassified sensors in $Z_r$ of this non-convex region $X$ with bounded curvature $r$ is at least $\lambda \area(\mathrm{thin\,\, strips}) \geq \frac{1}{16}\lambda\ell^2 \geq \frac{1}{16}\lambda\left(\frac{r}{1.68}\peri(X)\right)\geq\frac{1}{27}\lambda r \peri(X) = \Omega(\lambda r \peri(X))$. We now complete the proof of Theorem~\ref{thm:general_inZ}.

\section{Analysis for Convex Event Regions with Round Boundary}~\label{sec:convex}

We now prove our main result, Theorem~\ref{thm:convex} that if $X$ is a convex region with a round boundary of the curvature radius $r$, then the expected number of misclassified sensors in $Z_r$ significantly decreases. As a result, the convexity and the curvature constraint both are crucial for a better bound.

Let $s$ be a sensor in $Z_r$ whose nearest point to $\bd(X)$ is $s'$ in distance $\delta r$ for some constant $\delta>0$. Let $A$ be the disc of radius $r$ around $s$. Let $\alpha$ be the fraction of $A$ on the same side of $\bd(X)$ as $s$, i.e., $\alpha:= \area(A\cap X)/\area(A)$ if $s\in X$, $\alpha := \area(A\setminus X)/\area(A)$ if $s \notin X$. Then we can prove the following:
\begin{lemma}\label{lem:good0}
If $p \le 1/2$ and $\alpha \ge 1/2$, then the probability of $s$ being misclassified is at most
\[  e^{-\frac{1}{2}\lambda\area(A)(1-2p)^2(2\alpha-1)^2}. \]
\end{lemma}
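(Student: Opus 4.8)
The plan is to reduce the event of misclassification to a statement about two independent Poisson counts and then apply a Chernoff bound. Without loss of generality assume $s\in X$, so the correct label of $s$ is ``in $X$''. Among the neighbors of $s$ in the disc $A$, split the votes: let $N_{\mathrm{in}}$ be the number of neighbors reporting ``in $X$'' and $N_{\mathrm{out}}$ the number reporting ``not in $X$''. A neighbor in $A\cap X$ reports ``in $X$'' correctly with probability $1-p$, while a neighbor in $A\setminus X$ reports ``in $X$'' erroneously with probability $p$. By the coloring (thinning) property of the Poisson process applied to this location-dependent, independently marked point set, $N_{\mathrm{in}}$ and $N_{\mathrm{out}}$ are independent Poisson variables; since $\area(A\cap X)=\alpha\,\area(A)$ and $\area(A\setminus X)=(1-\alpha)\,\area(A)$, their means are $\mu_{\mathrm{in}}=\lambda\area(A)\bigl((1-p)\alpha+p(1-\alpha)\bigr)$ and $\mu_{\mathrm{out}}=\lambda\area(A)\bigl(p\alpha+(1-p)(1-\alpha)\bigr)$.

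First I would record two identities: $\mu_{\mathrm{in}}+\mu_{\mathrm{out}}=\lambda\area(A)$ and $\mu_{\mathrm{in}}-\mu_{\mathrm{out}}=\lambda\area(A)(1-2p)(2\alpha-1)$, the latter being nonnegative because $p\le 1/2$ and $\alpha\ge 1/2$. The sensor $s$ is misclassified only if the majority of its neighbors votes ``out'', which forces $N_{\mathrm{out}}\ge N_{\mathrm{in}}$: whenever $N_{\mathrm{out}}<N_{\mathrm{in}}$ the correct label wins outright, regardless of how ties are broken. Hence it suffices to bound $\Pr(N_{\mathrm{out}}-N_{\mathrm{in}}\ge 0)$ from above.

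Next I would apply the exponential Markov inequality to the Skellam variable $N_{\mathrm{out}}-N_{\mathrm{in}}$. Using the Poisson moment generating function and independence, for every $t>0$,
\[
\Pr(N_{\mathrm{out}}-N_{\mathrm{in}}\ge 0)\le \mathbb{E}\bigl[e^{t(N_{\mathrm{out}}-N_{\mathrm{in}})}\bigr]=\exp\bigl(\mu_{\mathrm{out}}(e^{t}-1)+\mu_{\mathrm{in}}(e^{-t}-1)\bigr).
\]
Minimizing the exponent over $t$ (the optimizer is $e^{t}=\sqrt{\mu_{\mathrm{in}}/\mu_{\mathrm{out}}}$) collapses the bound to $\exp\bigl(-(\sqrt{\mu_{\mathrm{in}}}-\sqrt{\mu_{\mathrm{out}}})^{2}\bigr)$. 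Finally I would convert this exponent into the stated form: writing $\mu_{\mathrm{in}}-\mu_{\mathrm{out}}=(\sqrt{\mu_{\mathrm{in}}}-\sqrt{\mu_{\mathrm{out}}})(\sqrt{\mu_{\mathrm{in}}}+\sqrt{\mu_{\mathrm{out}}})$ and invoking AM--GM in the form $(\sqrt{\mu_{\mathrm{in}}}+\sqrt{\mu_{\mathrm{out}}})^{2}\le 2(\mu_{\mathrm{in}}+\mu_{\mathrm{out}})$ gives
\[
(\sqrt{\mu_{\mathrm{in}}}-\sqrt{\mu_{\mathrm{out}}})^{2}\ge \frac{(\mu_{\mathrm{in}}-\mu_{\mathrm{out}})^{2}}{2(\mu_{\mathrm{in}}+\mu_{\mathrm{out}})}=\tfrac{1}{2}\lambda\area(A)(1-2p)^{2}(2\alpha-1)^{2},
\]
where the last equality uses the two identities above. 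Substituting this into the Chernoff bound yields exactly the claimed estimate.

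I do not expect a genuine obstacle here; the two points requiring care are the legitimacy of treating $N_{\mathrm{in}}$ and $N_{\mathrm{out}}$ as \emph{independent} Poisson variables, which is precisely where the Poisson model does the work through thinning of an inhomogeneously marked process, and the tie-breaking bookkeeping, which is disposed of by the conservative reduction to the event $\{N_{\mathrm{out}}\ge N_{\mathrm{in}}\}$.
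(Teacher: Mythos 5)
Your proof is correct, and it takes a genuinely different route from the paper's. The paper conditions on the Poisson number $k$ of sensors in $A$, uses the fact that conditionally the $k$ points are uniform so each votes ``out'' independently with probability $q=\alpha p+(1-\alpha)(1-p)\le 1/2$, applies the binomial Chernoff bound of Lemma~\ref{lem:Bn} to get $(2\sqrt{q(1-q)})^k$, and then sums the Poisson series to obtain $e^{-\lambda\area(A)(1-2\sqrt{q(1-q)})}$; finally it lower-bounds the exponent via $g(\alpha,p)=1-\sqrt{1-(1-2p)^2(2\alpha-1)^2}\ge\frac12(1-2p)^2(2\alpha-1)^2$. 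You instead invoke the marking/thinning theorem to split the neighbors into two \emph{independent} Poisson counts $N_{\mathrm{in}},N_{\mathrm{out}}$ and run a Chernoff argument directly on the Skellam difference $N_{\mathrm{out}}-N_{\mathrm{in}}$. It is worth noting that the two routes meet exactly in the middle: since $\mu_{\mathrm{in}}=\lambda\area(A)(1-q)$ and $\mu_{\mathrm{out}}=\lambda\area(A)q$, your optimized exponent $(\sqrt{\mu_{\mathrm{in}}}-\sqrt{\mu_{\mathrm{out}}})^2$ equals $\lambda\area(A)(1-2\sqrt{q(1-q)})$, the paper's intermediate bound, and your AM--GM step $(\sqrt{\mu_{\mathrm{in}}}-\sqrt{\mu_{\mathrm{out}}})^2\ge(\mu_{\mathrm{in}}-\mu_{\mathrm{out}})^2/\bigl(2(\mu_{\mathrm{in}}+\mu_{\mathrm{out}})\bigr)$ is the inequality $1-\sqrt{1-x}\ge x/2$ in disguise. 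What your approach buys is self-containedness and transparency: no conditioning, no series summation, no appeal to the external binomial estimate, and the independence structure makes clear where the exponent comes from; what the paper's approach buys is reuse of the machinery of Lemma~\ref{lem:Bn}, which it needs anyway for Theorem~\ref{thm:general_outZ}. Your tie-breaking treatment (absorbing the event $N_{\mathrm{out}}=N_{\mathrm{in}}$ into the bad event) is exactly as conservative as the paper's use of ``at least $\lceil k/2\rceil$'' successes, so nothing is lost there; the only pedantic caveat is that your optimizer $e^t=\sqrt{\mu_{\mathrm{in}}/\mu_{\mathrm{out}}}$ requires $\mu_{\mathrm{in}}>\mu_{\mathrm{out}}>0$, but the boundary cases ($p=1/2$ or $\alpha=1/2$, where the claimed bound is trivially $1$, and $\mu_{\mathrm{out}}=0$, where the probability is $e^{-\mu_{\mathrm{in}}}$) are immediate.
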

\begin{proof}
Assume first that $s$ lies in $Z_r \cap X$. Then $\alpha = \area(A\cap X)/ \area(A)$. The sensors are located in $A$ according to a Poisson distribution, but if we assume there are $k$ sensors in $A$,  which happens with the probability ${1\over k!}(\lambda \area(A))^k e^{-\lambda\area(A)}$, then the conditional distribution of these $k$ sensors over $A$ is uniform. So each of these $k$ sensors independently falls with probability $\alpha$ in $A \cap X$, and $1-\alpha$ in $A\setminus X$,
and there, with probability $p$, reports incorrectly, and with probability $1-p$, reports correctly.
Thus, we have $k$ independent Bernoulli experiments, which report being in $X$ with probability $\alpha(1-p)+(1-\alpha)p$ and being outside $X$ with probability $\alpha p +(1-\alpha)(1-p)$. Since $s$ is in $X$, the probability of an incorrect classification is the probability of a majority vote for being outside $X$, that is, more than half of $k$ Bernoulli experiments being successful with success probability $\alpha p +(1-\alpha)(1-p)$; if $\alpha p +(1-\alpha)(1-p)\leq {1\over 2}$, by the Chernoff bound (Lemma~\ref{lem:Bn}), this is at most
$$\left(
2\sqrt{(\alpha p +(1-\alpha)(1-p))
(\alpha(1-p)+(1-\alpha)p)} \right)^k.$$
For $p \le 1/2$ and $\alpha \ge 1/2$, the necessary condition for the Chernoff inequality, i.e., $\alpha p +(1-\alpha)(1-p)\leq{1\over 2}$, is satisfied, and thus the probability of $s$ being misclassified, without the condition on $k$, is at most
\begin{eqnarray}
&\phantom{=}&\sum_{k=0}^\infty
{1\over k!} (\lambda\area(A))^k e^{-\lambda\area(A)}
\left(
2\sqrt{(\alpha p +(1-\alpha)(1-p))
(\alpha(1-p)+(1-\alpha))} \right)^k \nonumber\\
&=& e^{-\lambda\area(A)(1-2\sqrt{(\alpha p +(1-\alpha)(1-p))
(\alpha(1-p)+(1-\alpha)p)})}~\label{eq:insideX}.
\end{eqnarray}
Set
\begin{eqnarray*}
g(\alpha, p) & = & 1-2\sqrt{(\alpha p +(1-\alpha)(1-p)) (\alpha(1-p)+(1-\alpha)p)}  \\
& = & 1-\sqrt{1-(1-2p)^2(2\alpha-1)^2}.
\end{eqnarray*}
For fixed $p\in [0, {1\over 2}]$, $g(\alpha,p)$ is a monotone
increasing function in $\alpha\in\lbrack{1\over 2},1\rbrack$ with $g({1\over2},p) = 0$ and $g(1,p) = 1-2\sqrt{p(1-p)}$, and satisfies that $g(\alpha,p) \ge {1\over2}(1-2p)^2(2\alpha-1)^2$. Thus we get the probability of $s$ in $Z_r \cap X$  being misclassified is at most $e^{-\frac{1}{2}\lambda\area(A)(1-2p)^2(2\alpha-1)^2}.$

\par
For sensors $s$ lying in $Z_r\setminus X$, letting $\alpha = \area(A \setminus X)/ \area(A)$, we get the same upper bound as (\ref{eq:insideX}) for the probability of $s$ being misclassified by more than half of neighbours reporting being inside $X$. Therefore we get the upper bound of the lemma for the probability of $s$ being misclassified, no matter whether $s$ is inside or outside $X$.
\end{proof}

\par
This lemma provides us a good bound on the expected number of misclassified sensors, but only for those satisfying its necessary condition, $\alpha \ge 1/2$. Since $X$ is convex, all sensors in $Z_r \setminus X$ satisfy the condition, but some sensors in $Z_r\cap X$ may not satisfy the condition. Indeed, for sensors $s \in Z_r\cap X$, we can get a simple lower bound on $\area(A \cap X)$ as follows: the sensor $s$ is assumed to be on the inner parallel curve at distance $\delta r$ from $\bd(X)$ for some $0 < \delta \leq 1$. Let $s'$ be the point on $\bd(X)$ from $s$ in distance $\delta r$. Consider another disk $B$ of radius $r$ with its center in $X$ such that it is tangent to $\bd(X)$ at $s'$. Since the radius of curvature is at least $r$ everywhere on $\bd(X)$, by Blaschke's rolling ball theorem~\cite[pp. 114-116]{BL}, the interior of $B$ is completely contained in $X$, thus we have $\area(A \cap X) \ge \area(A \cap B) = r^2 \beta(\delta)$, where
\[ \beta(\delta) = \frac{\pi}{2} + 2 \left[ \int^{1}_{\frac{1-\delta}{2}}\sqrt{1-x^2} dx - \int^{\frac{1-\delta}{2}}_0 \sqrt{1-x^2} dx \right] =  \frac{\pi}{2} + 2 \left[ \frac{\pi}{4}-2 \int^{\frac{1-\delta}{2}}_0 \sqrt{1-x^2} dx \right]. \]
Since $\beta(\delta)$ is a monotone increasing function with $\beta^{-1}(\frac{\pi}{2})=1-2\arcsin\frac{\pi}{8} > 0.2$, we can conclude that for sensors in $Z_r \cap X$ within distance  $0.2r$ to the boundary of $X$, it is not necessarily true that $\area(A\cap X) \ge \pi r^2 /2 = \area(A)/2$, i.e., $\alpha \ge 1/2$. Therefore it is unavoidable to split the sensors into ``good" and ``bad" groups and to analyze them in different ways.


\begin{figure}
\centering
    \includegraphics[width=12cm]{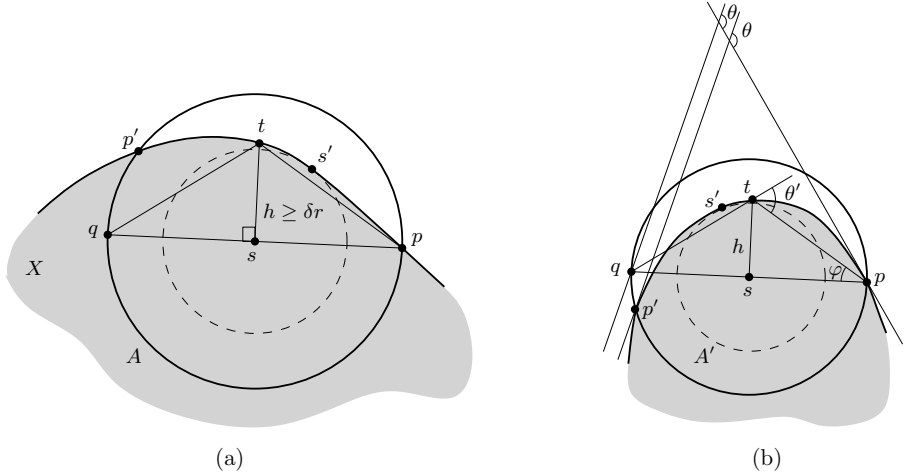}
\caption{Classification of sensors $s \in Z_r\cap X$. (a) $s$ is good since $q \in X$. (b) $s$ is bad since $q\not\in X$.}
\label{fig:bad}
\end{figure}

\par
For easier analysis, we use the following criteria for the split. Consider the disk $A$ of radius $r$ around $s$ and the boundary curve $\bd(X)$ as illustrated in Figure~\ref{fig:bad}. By Blaschke's rolling ball theorem, the curve $\bd(X)$ enters $A$ once at some point $p$, leaves $A$ once at some point $p'$ and it never enters $A$ afterwards. Let $q$ be the diametrically opposite point of $p$ in $A$. Then there are two possibilities when $s$ is in $Z_r$: $q$ lies in the same side of $\bd(X)$ as $s$, or in the opposite side of $\bd(X)$ to $s$. For the two cases when $s$ lies in $Z_r \cap X$, see Figure~\ref{fig:bad}. We call $s\in Z_r$ \emph{good} if $q$ lies in the side of $\bd(X)$ as $s$, and otherwise \emph{bad}. By the convexity of $X$, it is clear that all sensors in $Z_r \setminus X$ are good.

\subsection{Upper Bound on the Misclassified Good Sensors}

For good sensors in $Z_r$, we get the following bound on the probability of being misclassified.
\begin{lemma}\label{lem:good1}
Let $s$ be a good sensor in $Z_r$ whose distance to the boundary of $X$ is $\delta r$. Then the probability of $s$ being misclassified is at most
\[  e^{-\frac{2}{\pi^2}\lambda\area(A)(1-2p)^2\delta^2}. \]
\end{lemma}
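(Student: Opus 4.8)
The plan is to reduce Lemma~\ref{lem:good1} to the single geometric estimate
\[ 2\alpha - 1 \ge \tfrac{2}{\pi}\delta \]
for every good sensor $s$ whose distance to $\bd(X)$ is $\delta r$. Once this holds, the hypothesis $\alpha\ge 1/2$ of Lemma~\ref{lem:good0} is satisfied, and substituting $(2\alpha-1)^2\ge \frac{4}{\pi^2}\delta^2$ into the bound $e^{-\frac12\lambda\area(A)(1-2p)^2(2\alpha-1)^2}$ of that lemma produces exactly the exponent $\frac{2}{\pi^2}\lambda\area(A)(1-2p)^2\delta^2$ asserted here. So all the real work lies in the area estimate $\area(A\cap X)\ge \frac{\pi}{2}r^2+\delta r^2$ when $s\in X$ (and symmetrically $\area(A\setminus X)\ge \frac{\pi}{2}r^2+\delta r^2$ when $s\notin X$), since these are equivalent to $2\alpha-1\ge\frac{2}{\pi}\delta$.

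For sensors outside $X$ the estimate is easy and does not even need goodness. By convexity $X$ lies in the closed half-plane $H$ bounded by the tangent to $\bd(X)$ at the nearest point $s'$, so $A\setminus X\supseteq A\setminus H$, and $A\setminus H$ is the circular segment of $A$ containing $s$ cut off by a chord at distance $\delta r$ from the center. Evaluating this segment's area gives $2\alpha-1\ge \frac{2}{\pi}\bigl(\arcsin\delta+\delta\sqrt{1-\delta^2}\bigr)\ge \frac{2}{\pi}\delta$, using $\arcsin\delta\ge\delta$.

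The interior case is where goodness is essential, because the outer tangent now bounds $\area(A\cap X)$ from \emph{above}, the wrong direction. Here I would invoke the single-crossing property guaranteed by Blaschke's rolling ball theorem: $\bd(X)$ meets $A$ in a single arc from $p$ to $p'$, and since $q=-p$, the points $p,s,q$ are collinear on a diameter $D$. As $q\in X$ by goodness and $s\in X$, this whole diameter lies in $X$ and meets $\bd(X)$ only at $p$; hence the boundary arc stays strictly on one side of $D$, and the opposite half-disk $A\cap D^{-}$ lies entirely inside $X$. This already yields $\area(A\cap X)\ge\frac{\pi}{2}r^2$, i.e. $\alpha\ge\frac12$. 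The missing $\delta r^2$ should come from the region trapped between $D$ and the boundary arc on the other side, which rises up to the nearest-boundary point $s'$ at height $\delta r$; that region contains the triangle with base $pq$ (the full diameter) and apex $s'$, of area $r\cdot\mathrm{dist}(s',D)$.

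The main obstacle is extracting the clean constant $\frac{2}{\pi}$ from the off-center configurations. Estimating $\mathrm{dist}(s',D)$ crudely only gives $\delta r\sqrt{1-\delta^2}$, the loss arising because $D$ need not be perpendicular to $ss'$, and this falls just short of the required $\delta r$. To recover the exact constant one must account for the full area swept between the diameter and the convex boundary arc rather than a single triangle, and verify that the extremal configuration is the symmetric one in which the boundary chord $pp'$ passes through the disk center $s$ (so that $q=p'$); there the triangle $pp's'$ alone already contributes $\delta r^2$, and a short monotonicity argument in the chord offset $c$ — the quantity $r^2\arcsin(c/r)+\delta r\sqrt{r^2-c^2}$ is increasing on $[0,\delta r]$ with minimum value $\delta r^2$ at $c=0$ — shows every other good configuration does at least as well. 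Carrying out this extremal analysis exactly, instead of through the lossy perpendicular-distance bound, is the delicate step.
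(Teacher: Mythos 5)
Your reduction of the lemma to the geometric estimate $2\alpha-1\ge\frac{2}{\pi}\delta$, fed into Lemma~\ref{lem:good0}, is exactly the paper's strategy, and your exterior case (tangent half-plane at $s'$, major circular segment, $\arcsin\delta\ge\delta$) is correct; the half-disk bounded by the diameter $D$ through $p,s,q$ giving $\alpha\ge\frac12$ for good interior sensors is also the paper's step. But in the interior case there is a genuine gap, which you yourself flag: the extra area $\delta r^2$ is never actually established. The calculus fact you cite is fine --- $f(c)=r^2\arcsin(c/r)+\delta r\sqrt{r^2-c^2}$ has derivative $r(r-\delta c)/\sqrt{r^2-c^2}>0$ on $[0,\delta r]$, with $f(0)=\delta r^2$ --- but nothing in your sketch shows that the area trapped between the dividing line and the boundary arc in an \emph{arbitrary} good configuration dominates $f(c)$, nor that the relevant offset satisfies $c\le\delta r$, nor that goodness places $s$ on the favorable side of the chord $pp'$. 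Note in particular that goodness is defined through the diameter $pq$ while your extremal family is parametrized by the chord $pp'$, and the two pictures are conflated: when $X$ is a disk of radius exactly $r$, the chord $pp'$ lies at distance $(1-\delta)r/2>\delta r$ from $s$ for $\delta<1/3$, with the arc on $s$'s side of it; that configuration happens to be bad ($q\notin X$), but your proposal contains no argument ruling such pictures out for good sensors. As written, "a short monotonicity argument shows every other good configuration does at least as well" is a restatement of the desired inequality, not a proof.

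The missing idea is a different choice of apex, and it makes the whole extremal analysis unnecessary. The paper does not use $s'$: it takes $t$ to be the point of $\bd(X)$ on the perpendicular to $pq$ through the center $s$, on the arc side (``directly above $s$''). Since the base $pq$ passes through $s$, the height of the triangle $\triangle pqt$ over $pq$ is exactly $|st|$, and $|st|\ge\mathrm{dist}(s,\bd(X))=\delta r$ holds by definition because $t\in\bd(X)$; if the perpendicular ray happens not to meet the arc inside $A$, its endpoint on the circle bounding $A$ lies in $X$ and serves as apex with height $r\ge\delta r$. Convexity with $p,t\in\bd(X)$ and $q\in X$ gives $\triangle pqt\subseteq X$, hence $\area(A\cap X)\ge\frac12\area(A)+\frac12\cdot 2r\cdot\delta r=\area(A)\left(\frac12+\frac{\delta}{\pi}\right)$, which is the bound you need. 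Your lossy estimate $r\cdot\mathrm{dist}(s',D)$ arose only because you fixed the apex at $s'$ and let the angle to the base line vary; fixing the base (the diameter through $s$) and choosing the apex on the perpendicular makes the $\delta r$ height immediate and closes the gap in one line.
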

\begin{proof}
As in Figure~\ref{fig:bad}(a), we first consider the case that $s\in Z_r\cap X$. Let $A$ be a disk of radius $r$ around $s$ on the inner parallel curve at distance $\delta r$ from $\bd(X)$, and let $s'$ be the closest point on $\bd(X)$ from $s$. Using the curvature constraint and the fact that $q\in X$, the half of $A$ bounded by line $pq$ is contained in $X$. In addition, on the other side of $pq$, the triangle $\triangle pqt$ of height $h\geq \delta r$, where $t$ is the point on $\bd(X)$ directly above $s$, is also contained in $X$. Thus $\area(A\cap X) \geq \frac{1}{2}\area(A)+\area(\triangle pqt) = \area(A)(\frac{1}{2}+\frac{\delta}{\pi})$. This gives us a lower bound on $\alpha(\delta)$, where the area of the portion of $A$ lying inside $X$ is expressed as $\alpha(\delta)\area(A)$, that is, $\alpha(\delta) = \area(A\cap X)/\area(A)$. Then $\alpha(\delta) \ge \frac{1}{2}+\frac{\delta}{\pi}.$
Similarly, for good sensors $s$ lying in $Z_r \setminus X$, the portion of $A$ lying outside $X$ satisfies that $\alpha(\delta) \ge \frac{1}{2}+\frac{\delta}{\pi}$.
Plugging the lower bound $\frac{1}{2}+\frac{\delta}{\pi}$ into $\alpha$ of Lemma~\ref{lem:good0}, we get the result.
\end{proof}

\par
We integrate this probability over all inner and outer parallel curves and get an upper bound on the expected number of misclassified good sensors in $Z_r$:
\begin{eqnarray}
&\phantom{\leq}& \hbox{Expected number of misclassified good sensors in $Z_r \cap X$} \nonumber\\
&& + \hbox{ Expected number of misclassified good sensors in $Z_r \setminus X$} \nonumber\\
&\leq &\lambda r \int_{0}^1 e^{-{2\over\pi^2}\lambda\area(A)
(1-2p)^2\delta^2}(\peri(X)-2\pi\delta r) d\delta \nonumber\\
& & + \lambda r \int_{0}^1 e^{-{2\over\pi^2}\lambda\area(A) (1-2p)^2\delta^2} (\peri(X)+2\pi\delta r) d\delta \nonumber\\
&= & 2 \lambda r \peri(X) \int_{0}^1 e^{-{2\over\pi^2}\lambda\area(A)
(1-2p)^2\delta^2} d\delta \nonumber\\
& < &  2 \lambda r \peri(X) {\sqrt{\pi}\over2\sqrt{ {2\over\pi^2}\lambda\area(A)
(1-2p)^2}} <{\pi\sqrt{\lambda}\over \sqrt{2}(1-2p)} \peri(X). \label{eq:good}
\end{eqnarray}
For the upper bound on the integrals, we used that $\int_0^1 e^{-cx^2}dx= \frac{\sqrt{\pi}}{2\sqrt{c}}\mathrm{erf}(\sqrt{c}) < \frac{\sqrt{\pi}} {2\sqrt{c}}$, where $\mathrm{erf}(x)$ is an error function appeared in integrating the Gaussian function with $\mathrm{erf}(x) < 1$ for any $x < \infty$. 

\subsection{Upper Bound on the Misclassified Bad Sensors}
Now we derive a bound on the expected number of misclassified bad sensors in $Z_r$. Note that all bad sensors of $Z_r$ appear only in $Z_r\cap X$, and see Figure~\ref{fig:bad}(b) for illustration. For bad points(sensors), we do not know any upper bound but $1$ on the probability of being misclassified. However, we can get an upper bound on the total length of disjoint bad curve segments, which consist of bad points only, on $C_\delta$, an inner parallel curve at distance $\delta r$ from $\bd(X)$ for some $0 < \delta \le 1$.
For this we use a covering argument and the fact that the total direction change of a simple closed convex curve is $2\pi$.

\begin{lemma}\label{lem:length_bad}
The total length of bad curve segments on the inner parallel curve $C_\delta$ at distance $\delta r$ to the boundary of $X$ is at most
\[ \min\left( {3\pi r\over \delta}, \peri(X)-2\pi\delta r \right).\]
\end{lemma}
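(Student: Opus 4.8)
The plan is to treat the two terms in the minimum separately, with essentially all of the work going into the first. The bound $\peri(X)-2\pi\delta r$ is immediate: the bad curve segments form a subset of $C_\delta$, and for a convex region the inner parallel curve at distance $\delta r$ has total length exactly $\peri(X)-2\pi\delta r$. Indeed, parametrizing $\bd(X)$ by the direction $\phi\in[0,2\pi)$ of its outward normal, the radius of curvature of $C_\delta$ at the corresponding point is $\rho(\phi)-\delta r$ (where $\rho(\phi)\ge r$ is the radius of curvature of $\bd(X)$), so its arc-length element is $(\rho(\phi)-\delta r)\,d\phi$ and $\int_0^{2\pi}(\rho(\phi)-\delta r)\,d\phi=\peri(X)-2\pi\delta r$, using that the total direction change is $2\pi$.

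For the bound $3\pi r/\delta$ I would work in this same normal-direction parametrization, using the nearest-point projection $s\mapsto s'$, which is a bijection $C_\delta\to\bd(X)$ by convexity together with the rolling-ball theorem (already invoked in the excerpt). Two disjoint bad segments of $C_\delta$ then correspond to two disjoint $\phi$-intervals, because the outward-normal direction is monotone along a convex curve. Thus the target quantity is $\int_{\mathrm{bad}}(\rho(\phi)-\delta r)\,d\phi$, and the whole problem reduces to showing that the boundary turns at a definite minimum rate (in $\phi$ per unit arc length on $C_\delta$) wherever $s$ is bad.

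The guiding computation is the osculating-circle case. If $X$ were locally the disc of radius $\rho$ tangent to $\bd(X)$ at $s'$, a direct calculation shows that the antipode $q$ of the entry point $p$ lies outside $X$ \emph{exactly} when $\rho<\tfrac{r(1+\delta^2)}{2\delta}$; at the critical radius one finds $p=(-r,0)$ and $q=(r,0)\in\bd(X)$, pinning down the threshold. Since a bad point then forces $\rho-\delta r\le\tfrac{r(1-\delta^2)}{2\delta}\le\tfrac{r}{2\delta}$, the turning rate satisfies $\frac{d\phi}{ds}=\frac{1}{\rho-\delta r}\ge\frac{2\delta}{r}$ along bad arcs, i.e.\ a bad segment of length $L$ accumulates turning at least $\tfrac{2\delta}{r}L$. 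Summing over the disjoint $\phi$-intervals coming from disjoint bad segments — the covering step — their turnings are disjoint pieces of the total direction change $2\pi$, so $\tfrac{2\delta}{r}\cdot(\text{total bad length})\le 2\pi$, giving total bad length $\le \pi r/\delta$, comfortably inside the claimed $3\pi r/\delta$.

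The hard part is making the turning-rate estimate of the previous paragraph rigorous, and this is where the factor-$3$ slack in $3\pi r/\delta$ is spent. Badness is a \emph{global} condition: whether $q\in X$ depends on where $\bd(X)$ enters the disc $A$, hence on the boundary within distance $r$ of $s$, not merely on the local radius of curvature at $s'$. In particular the local $\rho$ at $s'$ may be large while $\bd(X)$ bends sharply elsewhere inside $A$, so one cannot literally substitute the local $\rho$ into the threshold $\tfrac{r(1+\delta^2)}{2\delta}$ — the osculating computation only certifies the sharp constant. The genuine obstacle, and the step I expect to be most delicate, is therefore to convert the condition $q\notin X$ into a uniform lower bound on the turning of $\bd(X)$ over the relevant arc, using only convexity and the rolling-ball constraint $\rho\ge r$ (for instance by charging each bad point to the turning of the arc of $\bd(X)$ cut off inside $A$, between its entry and exit points $p,p'$, and showing this turning is bounded below). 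Once that robust version of the turning-rate bound is in hand, the covering/summation against the total direction change $2\pi$ closes the argument as above.
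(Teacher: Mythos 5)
Your handling of the second term is fine and agrees with the paper (the bad set is a subset of $C_\delta$, whose length is $\peri(X)-2\pi\delta r$), and your osculating-circle computation is correct as far as it goes: the threshold $\rho<\frac{r(1+\delta^2)}{2\delta}$ is exactly right for the local model. But the step you yourself flag as open is a genuine gap, and it is not a routine robustification. Your summation requires the \emph{pointwise} bound $\frac{d\phi}{ds}=\frac{1}{\rho(\phi)-\delta r}\ge\frac{2\delta}{r}$ at every bad point, and that is false. Concretely, let $\bd(X)$ contain a long straight edge joined to a circular arc of radius exactly $r$ (a rounded corner). Put the edge on $y=0$ with $X$ below, the corner arc starting at $(d,0)$ with center $(d,-r)$, and take $s=(0,-\delta r)$. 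The entry point is $p=(-r\sqrt{1-\delta^2},0)$, so $q=(r\sqrt{1-\delta^2},-2\delta r)$, and a short computation shows $q\notin X$ whenever $d<r\bigl(\sqrt{1-\delta^2}-2\sqrt{\delta}\bigr)$. Hence for small $\delta$ there is a bad segment of length $\Theta(r)$ along which $\rho(\phi)=\infty$, i.e.\ the local turning rate is $0$: the turning that certifies badness sits on the corner arc, whose image on $C_\delta$ is short and is shared by all these bad points. So the inequality ``turning rate $\times$ bad length $\le 2\pi$'' cannot be salvaged in pointwise form, and your disjoint-$\phi$-interval summation bounds nothing in such configurations.

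What the paper does instead follows exactly the charging idea in your final parenthesis, but adds a packing device to prevent double-charging, and that device is where the constant $3$ comes from. First, for a bad $s$ it lower-bounds the direction change $\theta$ of $\bd(X)$ between entering and leaving the disc $A$: using $h\ge\delta r$ and $\arctan x\ge\frac{\pi}{4}x$ on $[0,1]$, one gets $\theta\ge 2\arctan(h/r)\ge 2\arctan\delta\ge\frac{\pi}{2}\delta$, so at most $\frac{4}{\delta}$ arcs with disjoint interiors fit into the total turning $2\pi$. Second — and this is the part your plan lacks — since a bad point only certifies turning somewhere within distance $r$ of itself, the paper picks bad points along $C_\delta$ at arc-distance at least $r$, covers the entire bad set by the right half-discs $H_i$ of radius $r$ around the picked points, splits into even/odd indices so that the pieces $H_i\cap C_\delta$ are disjoint (up to one wrap-around piece), and bounds each piece's length by $\frac{\pi r}{3}$ via the curvature constraint, yielding $\frac{5\pi r}{3\delta}+\frac{4\pi r}{3\delta}\le\frac{3\pi r}{\delta}$. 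In short, you have correctly isolated the crux and even named the right charging target (the turning of $\bd(X)$ between $p$ and $p'$), but until the covering/packing argument is carried out, the proposal establishes no bound of the form $Cr/\delta$ at all.
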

\begin{proof}
As in the proof for the good sensors, we define $A$, $s'$, and $t$ for a bad sensor $s$ on the inner curve $C_\delta$. See Figure~\ref{fig:bad}(b). Then the disk $A'$ around $s$ of radius $\delta r$ touches $\bd(X)$ at $s'$ and is completely contained in $X$. Let $\theta$ be the angle by which the direction of $\bd(X)$ changes in counterclockwise direction between entering and leaving $A$. Using the facts that $h\ge \delta r$, $\theta \geq \theta'$, and $\arctan(x)\ge {\pi \over 4}x$ over $x\in [0, 1]$, we have that
$$\theta \geq \theta' = 2\varphi = 2\arctan(h/r) \geq 2\arctan(\delta)\ge {\pi\over 2}\delta.$$
Since the total direction change of $\bd(X)$ traversing a simple curve once around is at most $2\pi$, we cannot have more than ${4\over \delta}$ such curve segments on $C_\delta$ whose interiors are disjoint, each with a direction change of at least ${\pi\over 2}\delta$.

\begin{figure}
\centering
    \includegraphics[width=11cm]{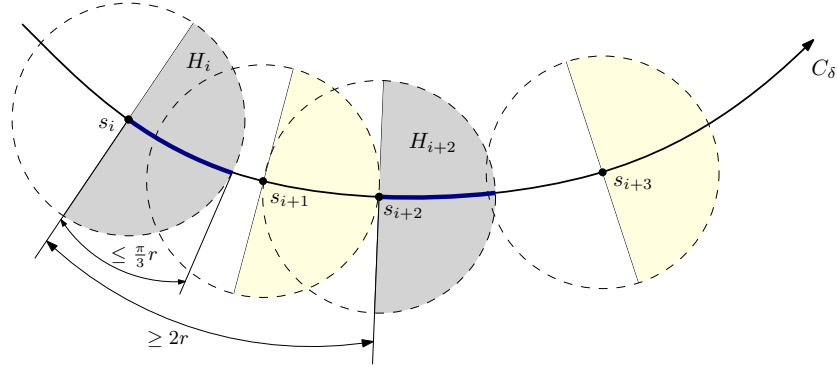}
\caption{Left half disks around each picked bad points.}
\label{fig:pick}
\end{figure}

We now pick bad points on $C_\delta$ at distance of at least $r$ along $C_\delta$ as traversing it in counterclockwise direction as follows. If the points on $C_\delta$ are all bad, then its length becomes the perimeter of $C_\delta$, i.e., at most $\peri(X)-2\pi\delta r$. Otherwise, there must be at least one good point on $C_\delta$. We traverse $C_\delta$ from the good point in counterclockwise direction. We will meet the first bad point, then we pick this bad point and call it $s_0$. We next pick the bad point $s_1$ on $C_\delta$ at distance of at least $r$ from $s_0$ along the curve. Continuing this picking process, we can pick $k$ bad points $s_0, s_1, \ldots, s_{k-1}$ where the distance from $s_{k-1}$ to $s_0$ might be less than $r$. As in Figure~\ref{fig:pick}(a), we denote by $H_i$ a right half of the disk of radius $r$ around each picked bad point $s_i$ with respect to the traversing direction. Then it is clear that the union of such right half disks covers all the bad points on $C_\delta$ because any bad point $s$ between $s_i$ and $s_{i+1}$\footnote{The addition on the indices is a modular addition with $k$.} is contained in $H_i$.

Without loss of generality, we assume that $k$ is odd. Let us now consider the intersections of $C_\delta$ with the right half disks $H_0, H_2, \ldots, H_{k-1}$ around every even picked bad points $s_0, s_2, \ldots, s_{k-1}$. These intersections $H_i \cap C_\delta$ for even $i$ result in curve segments (or arc intervals) of $C_\delta$ whose the left endpoint is $s_i$. We claim that these segments except from the first and last ones are disjoint; $H_{k-1}\cap C_\delta$ can overlap with $H_0\cap C_\delta$. As in Figure~\ref{fig:pick}, we consider two consecutive intersections, $H_i \cap C_\delta$ and $H_{i+2}\cap C_\delta$ for even $i < k-1$. It suffices to show that the right endpoint of $H_i\cap C_\delta$ lies in the left of the left endpoint of $H_{i+2}\cap C_\delta$ on the curve. The arc length between $s_i$ and $s_{i+2}$ is at least $2r$ by picking rule, and the length of $H_i\cap C_\delta$ is at most $\frac{\pi}{3}r$ by curvature constraint. Thus the distance between the right endpoint of $H_i\cap C_\delta$ and the left endpoint of $H_{i+2}\cap C_\delta$ is at least $2r-\frac{\pi}{3}r > 0$, so the claim is proved.

The number of disjoint bad curve segments on $C_\delta$ is already proved to be no more than ${4\over \delta}$, so the sum of their length (excluding the length of $H_{k-1}\cap C_\delta$) is at most $\frac{4}{\delta}\cdot \frac{\pi r}{3} = \frac{4\pi r}{3\delta}$. For the last curve segment $H_{k-1}\cap C_\delta$, we simply add its length $\frac{\pi r}{3}$, which gives the length of $\frac{(4+\delta)\pi r}{3\delta} \leq \frac{5\pi r}{3\delta}$ for $\delta \leq 1$. Considering the curve segments generated by every odd picked bad points, the sum of their length is at most $\frac{4\pi r}{3\delta}$. Note here that the first odd segment does not overlap with the last odd one. Thus the total length of bad curve segments on $C_\delta$ is at most $\frac{5\pi r}{3\delta}+\frac{4\pi r}{3\delta}\leq \frac{3\pi r}{\delta}$. Furthermore, the total length should be no more than the length of $C_\delta$, $\peri(X)-2\pi\delta r$, which completes the lemma.
\end{proof}

Integrating this over all inner parallel curves, we get an upper bound on the expected number of all misclassified bad sensors in $Z_r$ as follows:

\begin{eqnarray}
&\phantom{=}&\hbox{Expected number of misclassified bad sensors in $Z_r$} \nonumber\\
&\leq&\lambda r \int_{0}^1
\min\left({3\pi r\over \delta}, \peri(X)-2\pi\delta r \right)d\delta \leq
\lambda r \int_{0}^1 \min\left({3\pi r \over \delta}, \peri(X)\right)d\delta \nonumber\\
&=&\lambda r \left( \int_{0}^{3\pi r\over\peri(X)} \peri(X) d\delta
+ \int_{3\pi r\over\peri(X)}^1 {3\pi r \over \delta} d\delta \right) \nonumber\\
&=& 3 \lambda \pi r^2 \left( 1+ \ln {\peri(X)\over 3\pi r} \right) \leq  3 \lambda\pi r^2 \ln {\peri(X)\over r}. \label{eq:bad}
\end{eqnarray}

\par
Now we put both  (\ref{eq:good}) and (\ref{eq:bad}) together to obtain the upper bound on the expected number of misclassified points in $Z_r$, completing the proof of Theorem~\ref{thm:convex}:
\[{\pi\sqrt{\lambda}\over \sqrt{2}(1-2p)} \peri(X)+ 3 \lambda\pi r^2 \ln {\peri(X)\over r}.\]

\section{Simulation Results}\label{sec:experiment}

We perform some simulations to see the performance of the majority vote scheme and the dependency of several parameters such as the error probability $p$ of sensors, the neighboring radius $r$, and the geometric parameters of a convex event region $X$. We simulate the majority vote scheme each with $\lambda=2,500, 5,000, 10,000$, and $20,000$ sensors distributed by Poisson process in a unit square $Y$. We consider two event regions $X_S$ and $X_L$ as shown in Figure~\ref{fig:experiment}; $X_S$ is a square of side length $0.4$ with rounded corners of the curvature radius $0.1$, and $X_L$ is a longer and thinner rectangle of dimension $0.8\times 0.2$ with the same type of corners. Note here that they have the same area, but $X_L$ has a longer perimeter than $X_S$; $\peri(X_S)=0.2\pi+0.8$ and $\peri(X_L)=0.2\pi+1.2$. We test the majority vote scheme for $20$ different radii $r$ by incrementing $0.005$ from $0.005$ to $0.1$ and for $7$ different error probabilities $p$ by incrementing $0.05$ from $0.05$ to $0.35$.

\begin{figure}
\centering
\epsfig{file=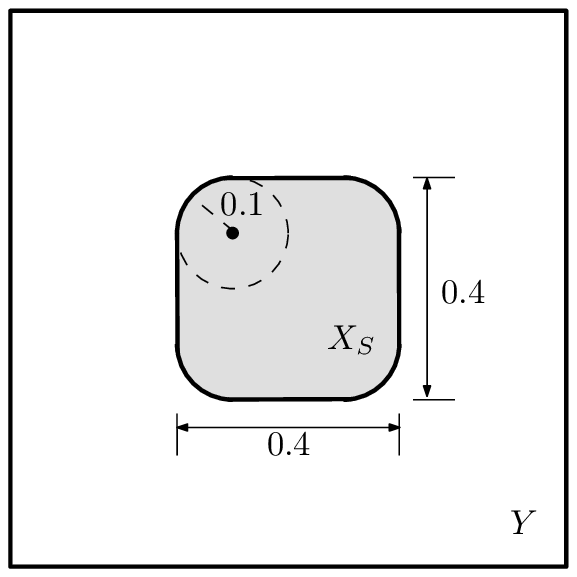, width = 4.5cm}\quad\quad\quad\quad
\epsfig{file=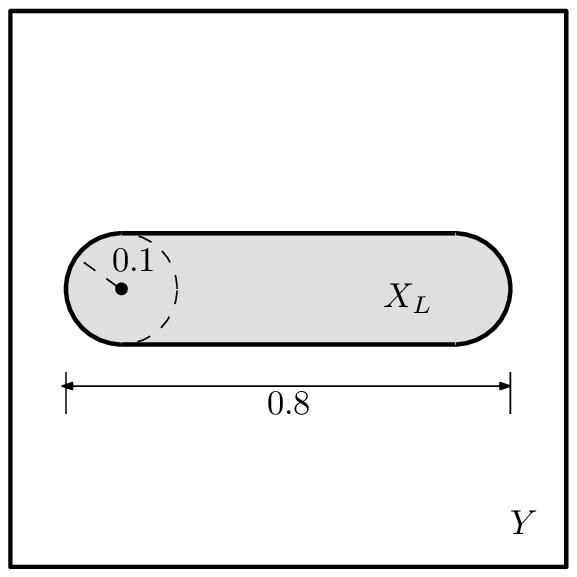, width = 4.5cm }
\caption{Event regions $X_S$ and $X_L$, having the same area but different perimeter.}
\label{fig:experiment}
\end{figure}

\subsection{Results}

We first check the correction rate of the major vote scheme, which is the ratio of the number of sensors revised correctly after the major vote scheme with the number of the initial errors. Figure~\ref{fig:sim-ratio} shows such correction rates for various values of $r$, $p$ and $\lambda$, respectively. If $r$ is small or $\lambda$ is small, then each sensor has too few neighbors to correct its error by the majority vote scheme, and what is worse is that the initial right decision can be changed even when the sensor is far from the boundary of $X$. However, as $r$ grows or $\lambda$ increases, much more misclassified sensors are correctly revised by neighborhood comparisons; more than $80\%$ for $r \geq 0.025$ or $\lambda \geq 1,000$. The correction rate for $p$ becomes the highest around $p = 0.15$; almost $86\%$ for $X_S$ and $84.7\%$ for $X_L$. But, if $p$ is too small or too large, then the rate goes below $80\%$, still above $70\%$.

\par
Figure~\ref{fig:sim-ub-final} shows the comparison on the difference between the number of final errors in the simulation and the upper bound on the number of the final errors proved in this paper for $X_S$ and $X_L$. The upper bound is given as the sum of two upper bounds, each in $Y\setminus Z_r$ of Theorem~\ref{thm:general_outZ} and in $Z_r$ of Theorem~\ref{thm:convex}, which is at most
\begin{eqnarray}
 & & 2\lambda \sqrt{p(1-p)}e^{-(1-2\sqrt{p(1-p)})\lambda\pi r^2}\area(Y\setminus Z_r) + \nonumber \\
 & & {\pi\sqrt{\lambda}\over \sqrt{2}(1-2p)} \peri(X) + 3 \lambda\pi r^2 \ln{\peri(X)\over r}\label{eq:ub}.
\end{eqnarray}
The number of final errors in the simulation is clearly much less than the theoretical upper bound, which tells us the upper bound could be improved further.

\par
We next test how much the sensors in $Z_r$ are likely to be misclassified out of the total final errors in $Y$. Figure~\ref{fig:sim-beta} shows that as $r$ grows (or $\lambda$ increases), the misclassified sensors occur mostly in $Z_r$ since the errors in $Y\setminus Z_r$ decreases exponentially with the expected number of neighbors $\lambda \pi r^2$ by the bound~(\ref{eq:ub}), but the errors in $Z_r$ increases linearly with $\lambda \pi r^2$. However, the high probability $p$ causes many initial errors everywhere in $Y$, which makes the voting effect less powerful for the sensors in $Y\setminus Z_r$, thus the ratio decreases as $p$ increases.

We can also see the effect of the perimeter. Since $X_S$ and $X_L$ have the same area, the average number of sensors falling into them is almost equal, but $X_L$ has a $28\%$ longer perimeter than $X_S$. A longer perimeter is a bad factor to misclassify more sensors outside the event region. According to Theorem~\ref{thm:convex}, we can expect there would be a linear relation between the number of misclassified sensors and the perimeter of $X$. We can find such relation in Figure~\ref{fig:sim-ratio} to Figure~\ref{fig:sim-beta}.

\par
We now count the misclassified sensors $Z_r\cap X$ and $Z_r \setminus X$ separately. We expected the misclassified sensors in $Z_r\cap X$ are more likely to occur than the ones in $Z_r \cap X$ since the sensors in $Z_r\setminus X_S$ are all good, i.e., $\area(A\setminus X) \geq \area(A)/2$, where $A$ is the disk of radius $r$ around a sensor in $Z_r \setminus X$. We have checked such situation actually happens as in Figure~\ref{fig:sim-gamma}.

\par
Finally, we conclude from the simulation results that the best radius $r$ that produces the least misclassified sensors is $0.02\leq r \leq 0.03$ for $p = 0.1$, $0.04\leq r \leq 0.05$ for $p = 0.2$, $0.07\leq r \leq0.08$ for $p = 0.3$, and $0.09\leq r \leq 0.1$ for $p= 0.4$.

\subsection{Further Refinements}

A sensor with a small neighborhood radius $r$, say less than $0.03$ in our simulation, has few neighbors to correct its error by the majority vote scheme, so the correction rate is not satisfactory as we already checked in Figure~\ref{fig:sim-ratio}. A refinement method to achieve the high correction rate for small $r$ is to apply the majority vote scheme more than once, i.e., multiple vote rounds. Suppose that an initial error of a sensor $s$ is not corrected during the first vote round. After the first vote, several erroneous neighbors of $s$ would be revised correctly, thus $s$ is more likely to correct its error at the second vote round. After $t\geq 1$ rounds, the measurement of the sensors at distance $tr$ from $s$ can affect the decision of $s$. This, on the other hand, tells that more rounds are not always helpful, particularly for the sensor $s$ near $\bd(X)$ in the sense that $s$ can receive the information from many sensors on the opposite side of $s$, which can lead $s$ to make the wrong change. Thus we need to choose $t$ carefully. For a fixed distance from $s$, as $r$ gets smaller, $t$ becomes larger, so $t$ should be set in proportional to $1/r$. For large error probability $p$, we can expect the similar effect as for small $r$, but its impact seems a bit weaker than that of $r$. We set $t = \max(cp/{r},1)$ and $c = 0.5$. In our simulation, $t$ is at most $35$.

\par
In our simulation, instead of simply repeating a majority vote scheme $t$ times, we take an indirect but efficient implementation as follows: With each sensor $s$, we associate a real value $\mathrm{score}(s)$. Initially, $\mathrm{score}(s):= 1$ if the initial measurement of $s$ is that $s \in X$ and $\mathrm{score}(s) := -1$ if $s\not\in X$. At each round, $\mathrm{score}(s)$ is updated to be the average value of $\mathrm{score}(s')$ for the neighbors $s'$ of $s$. After the first round, if $\mathrm{score}(s) > 0$, then it means $s$ has the majority of its neighbors inside $X$, otherwise outside $X$, which is the exactly same judgement as the single-round vote scheme. The score values are propagated gradually like the pattern in the well-known Gau{\ss}-Seidel iterative method, thus after $t$ rounds, each sensor $s$ receives the score values of the sensors within distance $tr$ from $s$. Each sensor $s$ decides its final measurement by the sign of $\mathrm{score}(s)$, that is, decides that $s\in X$ if $\mathrm{score}(s) >0$ and $s\not\in X$ if $\mathrm{score}(s) <0$. If $\mathrm{score}(s)=0$, then it follows the measurement of $s$ made at the previous round.

\par
Figure~\ref{fig:sim-s-m} compares the correction rates for $X_S$ and $X_L$ between single-round and multiple-round vote schemes. The multiple-round scheme performs much better for large $p \geq 0.3$; $11.3\%$ and $12.9\%$ improvements each for $p = 0.3$ and $p= 0.4$. If we focus only on small radius $r$ between $0.01$ and $0.03$, then the correction rates are improved more as in Figure~\ref{fig:sim-s-m}. As a result, the multiple-round vote scheme is a reasonable refinement to improve the correction rate for small $r$ and large $p$.

The simulation is done in the desktop computer with Intel Core i7-2600 CPU, 3.40GHz. Regardless of the values of $r$, $p$, and $\lambda$, every major vote scheme runs in $46$ seconds, which is the multiple-round case with $35$ rounds where $r = 0.005$, $p = 0.35$, and $\lambda = 20,000$, but the average execution time over all $r$, $p$, and $\lambda$ is $2.7$ seconds.

\begin{figure}
\centering
\mbox{\epsfig{file=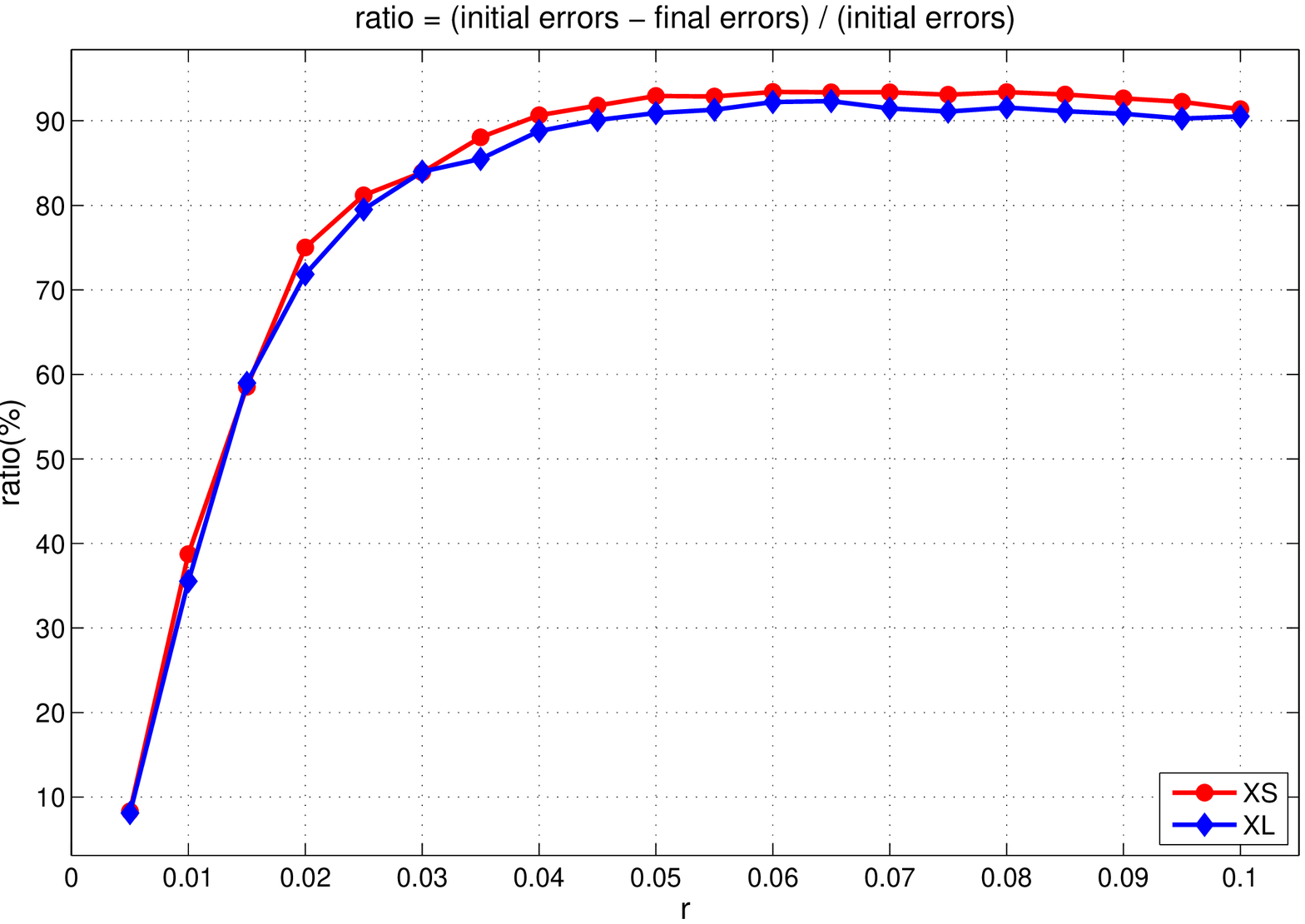, width=10cm}}
\mbox{\epsfig{file=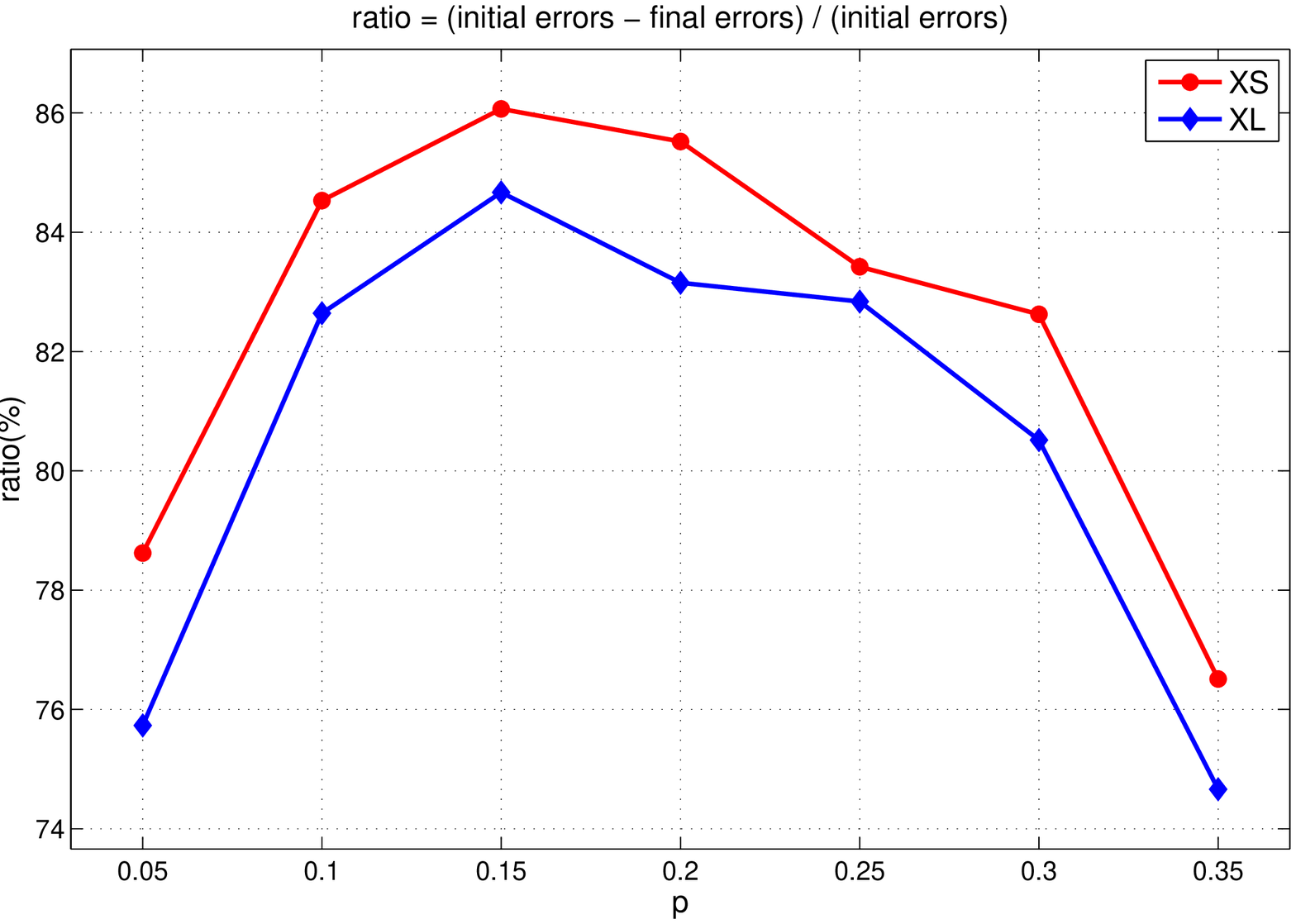, width=10cm}}
\mbox{\epsfig{file=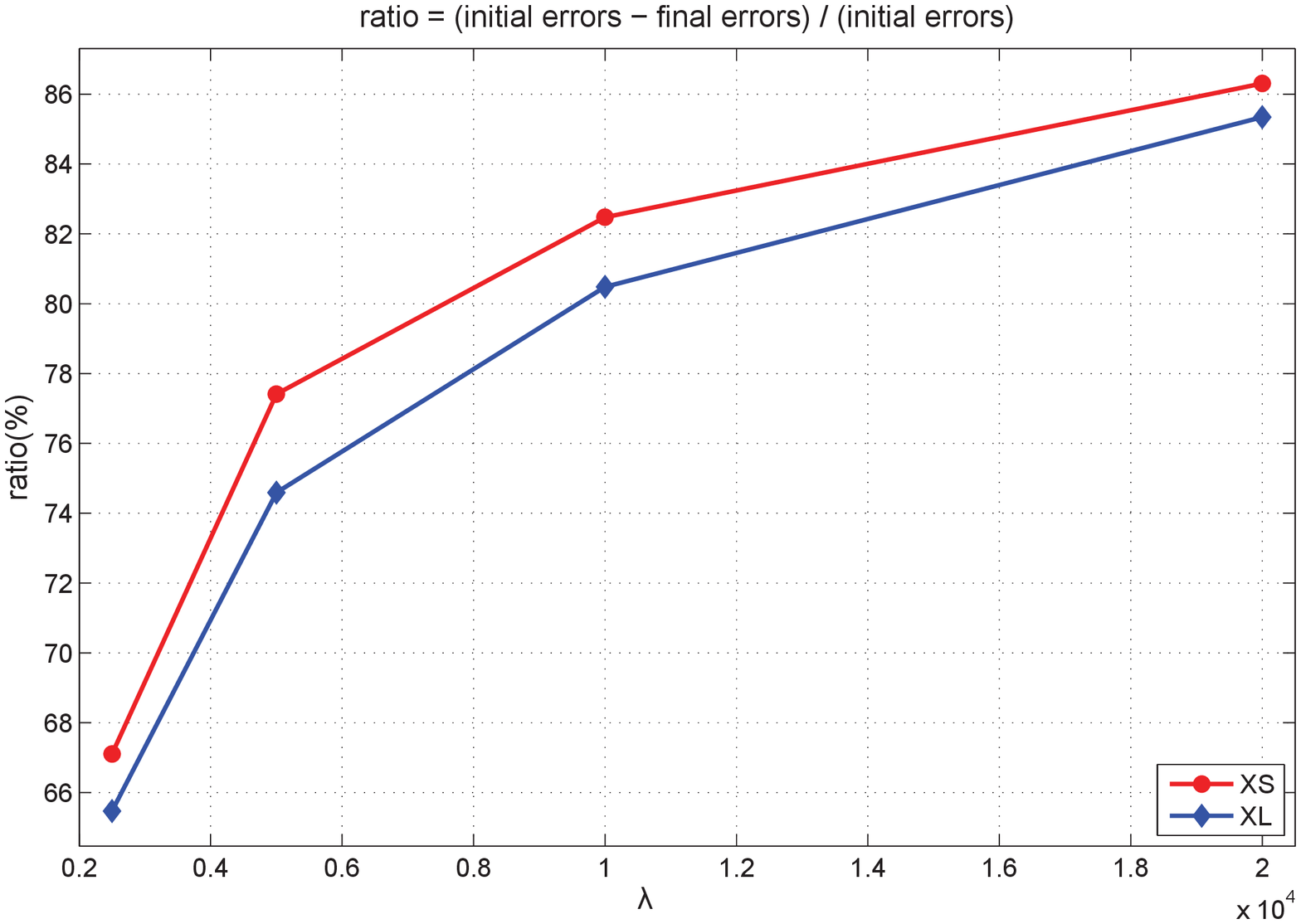, width=10cm}}
\caption{The correction rate by majority vote scheme for $X_S$ and $X_L$. (a) For $r$ (averaged by $p$ and $\lambda$). (b) For $p$ (averaged by $r$ and $\lambda$). (c) For $\lambda$ (averaged by $r$ and $p$).}
\label{fig:sim-ratio}
\end{figure}

\begin{figure}
\centering
\mbox{\epsfig{file=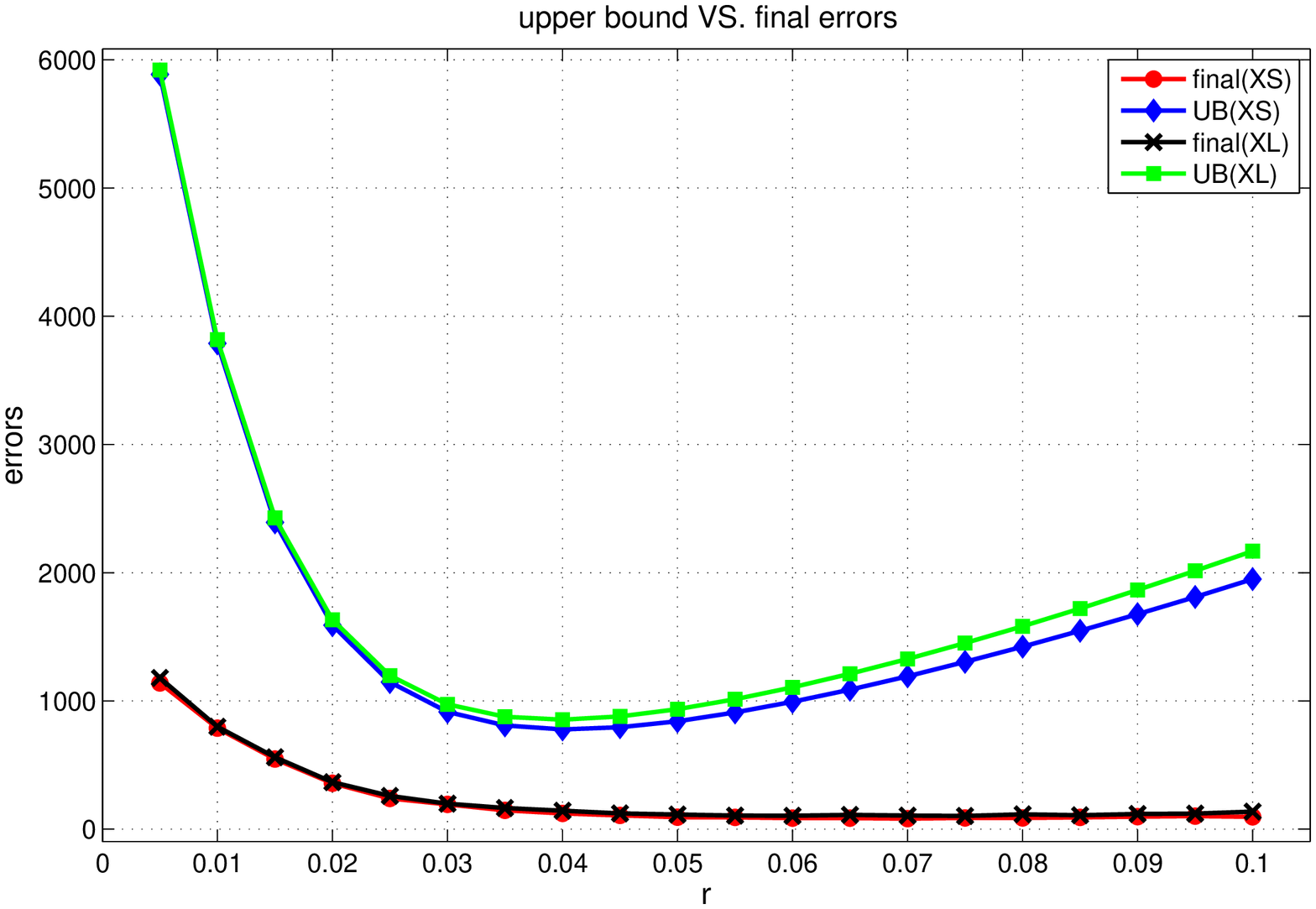, width=10cm}}
\mbox{\epsfig{file=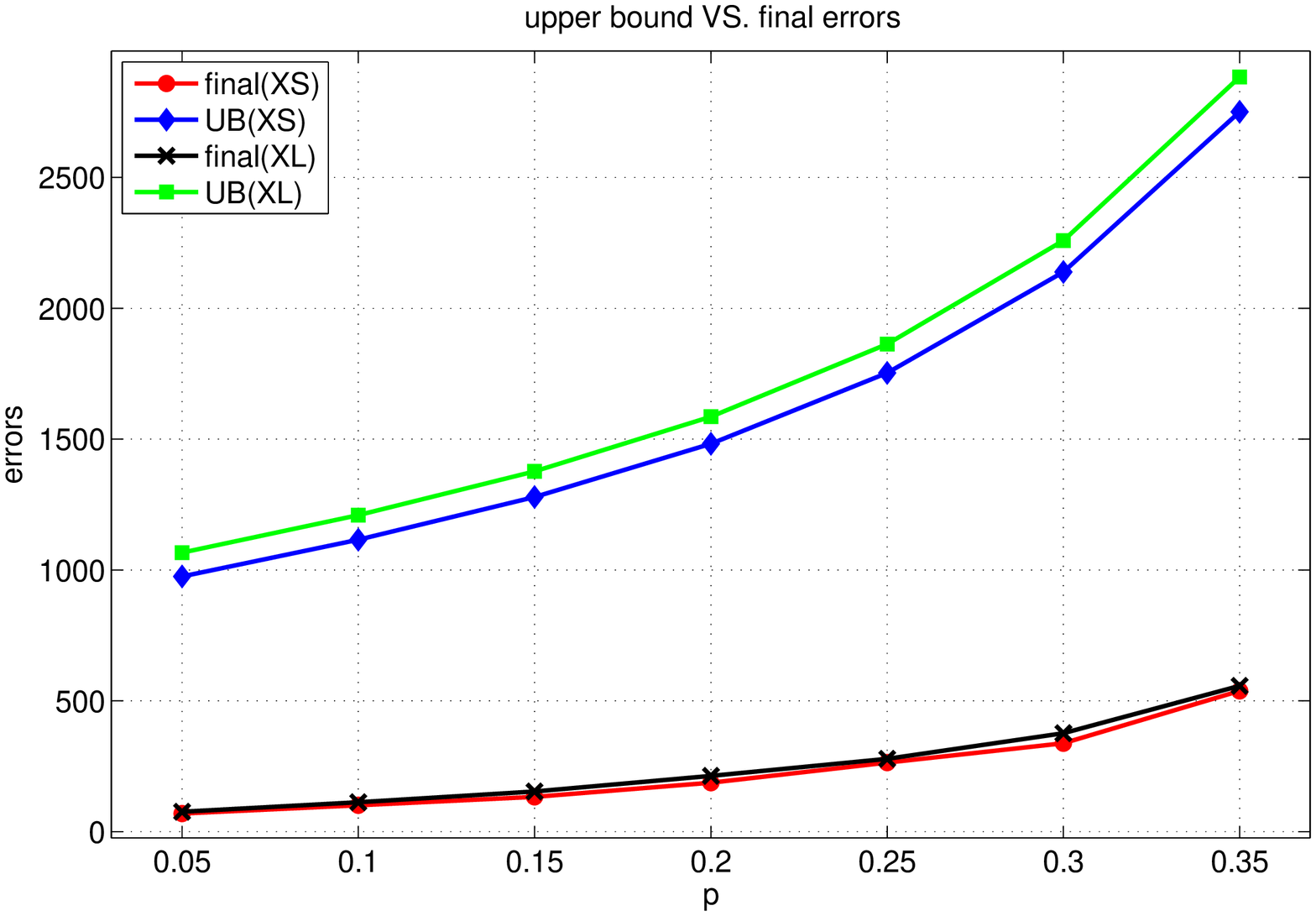, width=10cm}}
\mbox{\epsfig{file=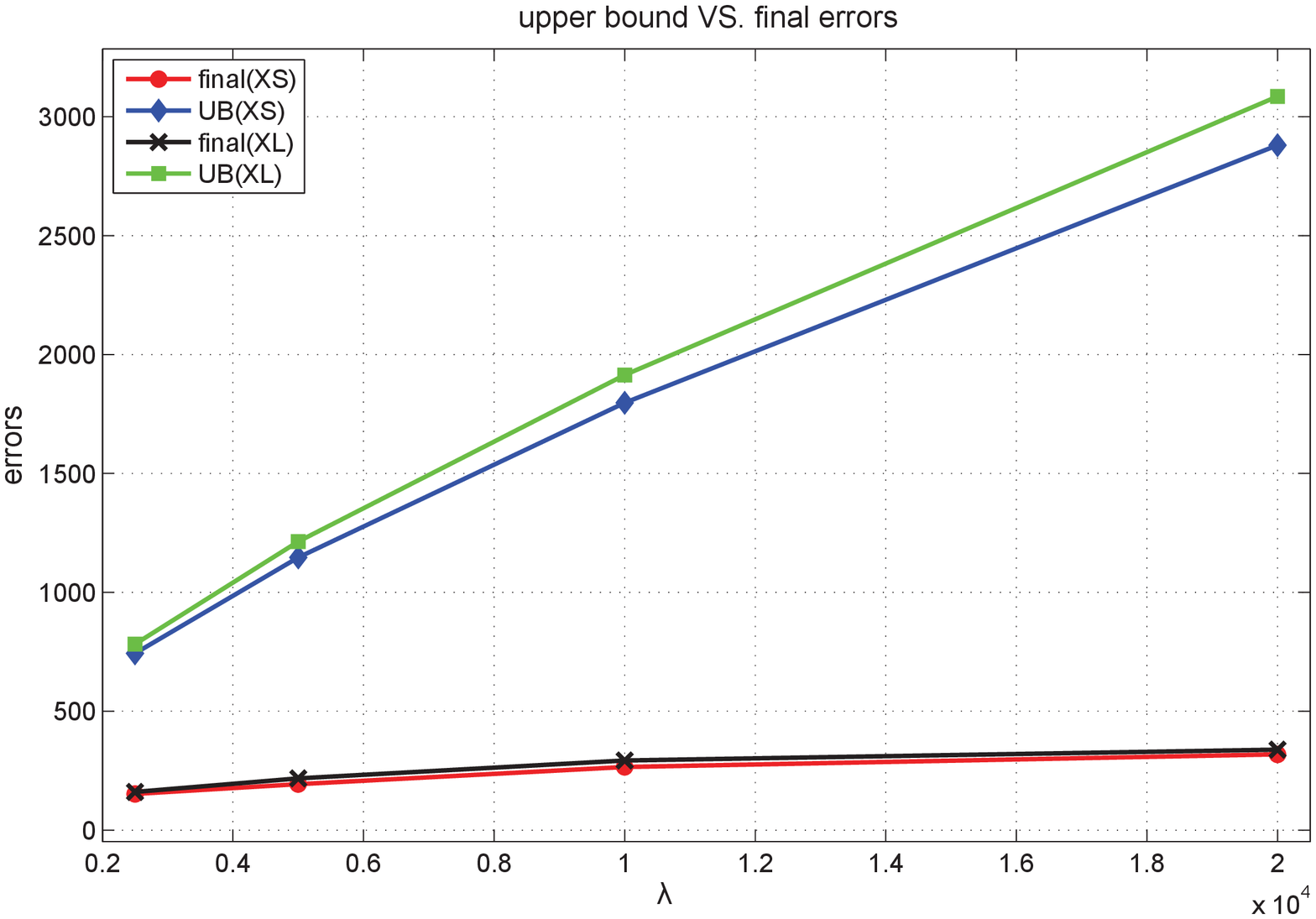, width=10cm}}
\caption{Comparisons between the number of final errors in simulations and the upper bound on the misclassified errors proved in Theorem~\ref{thm:general_outZ} and Theorem~\ref{thm:convex}.}
\label{fig:sim-ub-final}
\end{figure}

\begin{figure}
\centering
\mbox{\epsfig{file=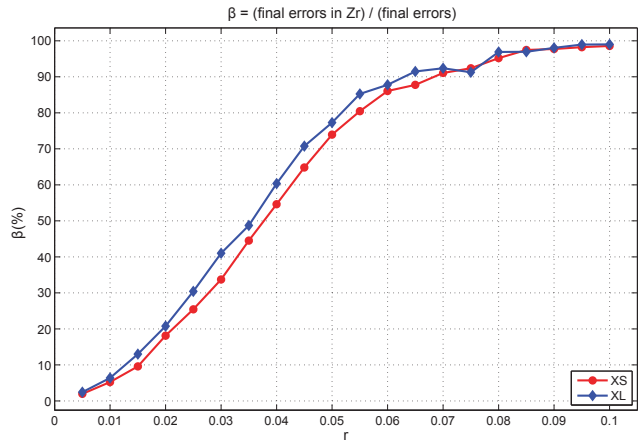, width=10cm}}
\mbox{\epsfig{file=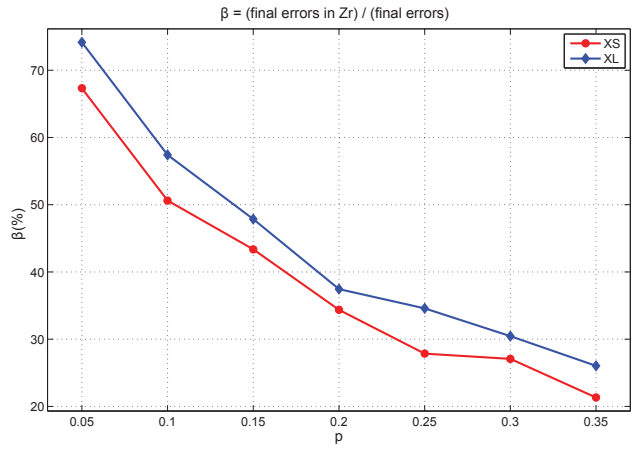, width=10cm}}
\mbox{\epsfig{file=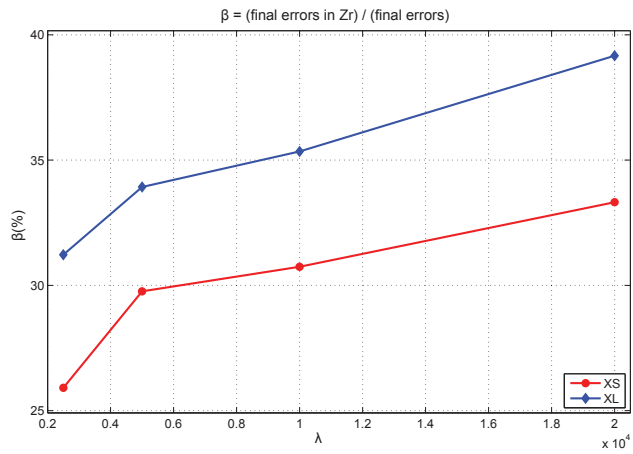, width=10cm}}
\caption{Ratio of misclassified sensors in $Z_r$ with the total errors in $Y$.}
\label{fig:sim-beta}
\end{figure}

\begin{figure}
\centering
\mbox{\epsfig{file=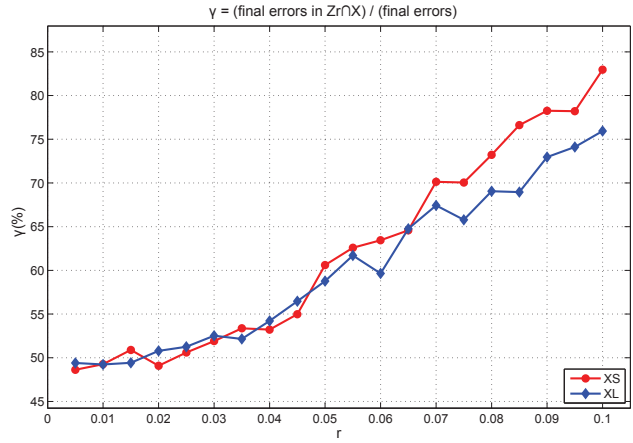, width=10cm}}
\mbox{\epsfig{file=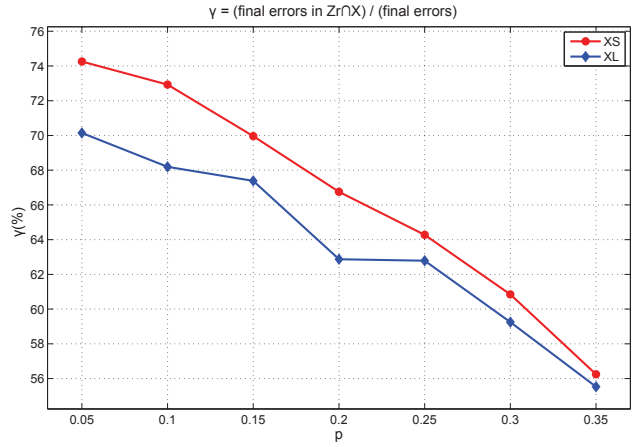, width=10cm}}
\mbox{\epsfig{file=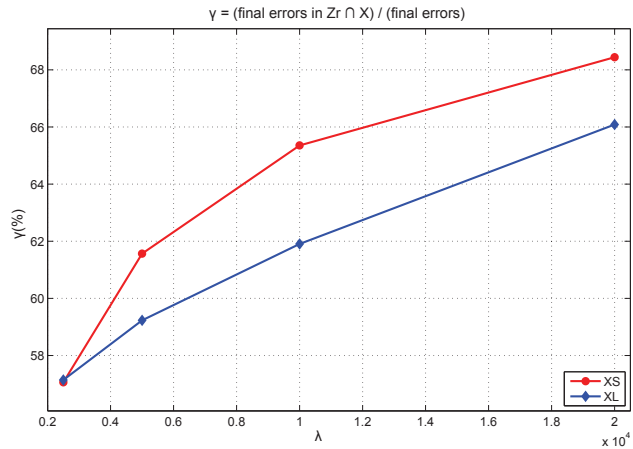, width=10cm}}
\caption{Ratio of errors in $Z_r\cap X$ with the errors in $Z_r$.}
\label{fig:sim-gamma}
\end{figure}

\begin{figure}
\centering
\mbox{\epsfig{file=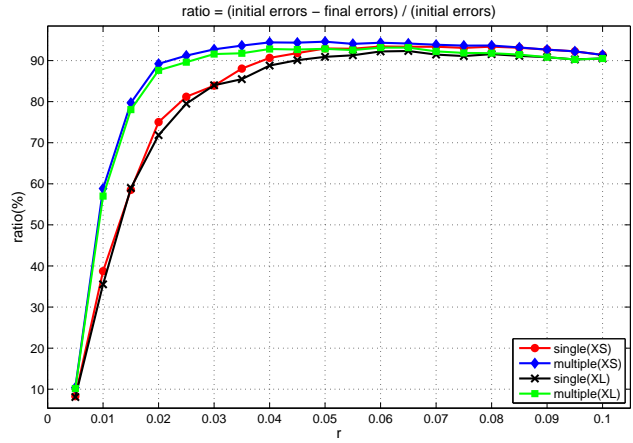, width=10cm}}
\mbox{\epsfig{file=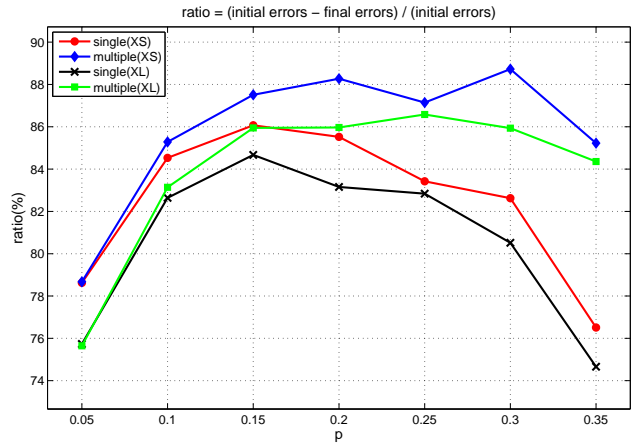, width=10cm}}
\mbox{\epsfig{file=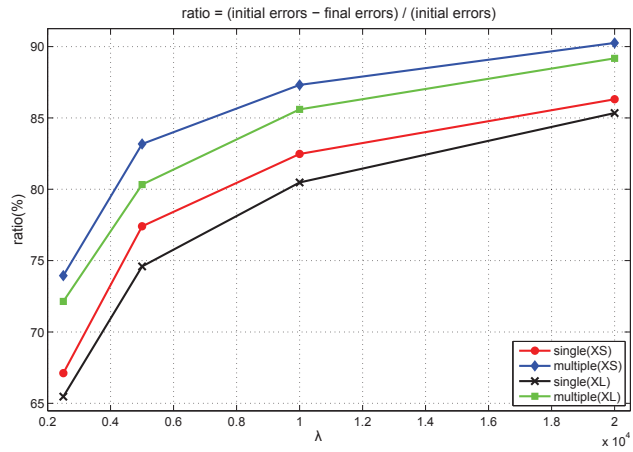, width=10cm}}
\caption{Comparison of single-round and multiple-round vote schemes.}
\label{fig:sim-s-m}
\end{figure}

\section{Conclusion}
In this paper we analyzed the simple majority rule and make explicit and precise the dependency on the error probability $p$ of sensors, the radius $r$ of the voting neighborhood, and the geometric parameters of event regions. To our best knowledge, this is the first to give bounds on the expected number of incorrectly classified sensors, with all such parameters in majority vote scheme. We also provided some empirical evidence indicating the dependency on such parameters.

The structure of our error bounds are the following:
\begin{itemize}
\item There is some background error, $2\lambda\sqrt{p(1-p)} e^{-(1-2\sqrt{p(1-p)})\lambda\pi r^2}\area(Y\setminus Z_r) $ which happens even when there is no event at all, i.e., $X=\emptyset$. This error depends on the total size of the area of interest $Y$, but decreases exponentially fast with the expected number of neighbors of each sensor, i.e., $\lambda\pi r^2$.

\item There is a term that depends on the perimeter of $X$, which means that sensors very near to the boundary of $X$ can be unavoidably misclassified. In fact, $\Omega(\lambda r \peri(X))$ sensors can be misclassified in a simplest thin rectangle of height $r/2$, which gives the lower bound for the term.

\item There are terms that depend on the expected number of neighbors of a point, the number of components of $X$, or the logarithm of the perimeter of $X$ specially for a convex region with bounded curvature.
\end{itemize}

\par
The assumption on the boundary curvature might look strong. We need it for two related things; for the existence of inner parallel curves of $\bd(X)$ at distance up to $r$, and for the property that any sensor neighborhood extends across $\bd(X)$ only on one side. The first might be a technical restriction which can somehow be circumvented, but the second is crucial for the voting algorithm: if the set $X$ is thin, then there are no sensor positions sufficiently deep inside $X$ that the majority of their neighbors will also be inside $X$, so the set will not be recognized by the majority rule.

\end{document}